\newtheorem{lemma}{Lemma}
\newtheorem{theorem}{Theorem}
\newtheorem{corollary}{Corollary}
\theoremstyle{definition}
\newcommand{\val}{\mathsf{val}}
\newcommand{\pval}{\mathsf{pval}}
\newcommand{\bE}{\operatorname*{\mathbb{E}}}
\newcommand{\bR}{\mathbb{R}}
\newcommand{\poa}{\mathsf{PoA}}
\newcommand{\eps}{\varepsilon}
\newcommand{\argmin}{\operatorname*{argmin}}
\newcommand{\bb}{\boldsymbol{b}}
\newcommand{\opt}{\mathsf{opt}}
\title{Efficiency of the Generalized Second-Price Auction for Value Maximizers}
\date{}
\author{
    Yuan Deng \\
    Google Research \\
    \texttt{dengyuan@google.com}
    \and
    Mohammad Mahdian \\
    Google Research \\
    \texttt{mahdian@google.com}
    \and
    Jieming Mao \\
    Google Research \\
    \texttt{maojm@google.com}
    \and
    Vahab Mirrokni \\
    Google Research \\
    \texttt{mirrokni@google.com}
    \and
    Hanrui Zhang \\
    Carnegie Mellon University \\
    \texttt{hanruiz1@cs.cmu.edu}
    \and
    Song Zuo \\
    Google Research \\
    \texttt{szuo@google.com}
}
\begin{document}

\maketitle

\begin{abstract}
We study the price of anarchy of the generalized second-price auction where bidders are value maximizers (i.e., autobidders). We show that in general the price of anarchy can be as bad as $0$. For comparison, the price of anarchy of running VCG is $1/2$ in the autobidding world. We further show a fined-grained price of anarchy with respect to the discount factors (i.e., the ratios of click probabilities between lower slots and the highest slot in each auction) in the generalized second-price auction, which highlights the qualitative relation between the smoothness of the discount factors and the efficiency of the generalized second-price auction.
\end{abstract}

\section{Introduction}

Generalized second-price (GSP) auctions have been the prevalent auction format deployed in the online advertising market for almost two decades~\citep{edelman2007internet,varian2007position}. GSP auction, a multi-slot extension of the classic Vickrey auction~\citep{vickrey1961counterspeculation}, was initially designed for advertisers maximizing (quasi-linear) utility given by the difference between value and payment. There is a large body of literature investigating the theoretical properties of GSP with utility-maximizing advertisers~\citep{aggarwal2006truthful,roughgarden2017price,hartline2014price,thompson2013revenue,nekipelov2015econometrics,lahaie2006analysis,gomes2009bayes,lucier2012revenue}. In particular, \citet{edelman2007internet} and \citet{varian2007position} characterized the class of envy free equilibria (a.k.a., symmetric Nash equilibria), and showed that envy free equilibria coincides with the equilibrium of VCG in terms of allocation so that they are always efficient. \citet{caragiannis2015bounding} characterized its price of anarchy (PoA)~\citep{koutsoupias1999worst}, the ratio between the worst welfare performance in equilibrium and the socially optimal welfare for both the Bayesian setting as $1/2.927$ and the full information setting as $1/1.259$.

However, the online advertising market has witnessed a significant shift towards {\em autobidding} in recent years. Autobidding, namely the procedure of delegating the bidding tasks to automated bidding agents to procure online advertising opportunities, has become a widely adopted mode of bidding, contributing to more than 80\% traffic of the online advertising market~\citep{dolan2020programmatic}. Instead of setting their bids manually, the autobidding product allows the advertisers to specify their high-level objectives and constraints. With these inputs, the autobidding agents adjust bids in real time to optimize on behalf of the advertisers (see \citet{aggarwal2019autobidding,balseiro2021landscape,deng2021towards} for more background on auto-bidding). A popular autobidding product is target cost-per-acquisition (target CPA), which aims to maximize the total value subject to a target return-on-investment (ROI) constraint that specifies the minimum admissible ratio between value and payment. Such a procedure greatly simplifies the interaction between the advertisers and the ad platform; and moreover, the online autobidding agent can better optimize the advertisers' bids by leveraging advanced machine learning techniques to predict how different bids impact the advertisers' payoffs. Following the recent line of research on autobidding~\citep{balseiro2021landscape}, in this paper, we denote advertisers maximizing their quasi-linear utility by {\em utility maximizers} and denote advertisers adopting autobidding products by {\em value maximizers}.

Such a significant shift in the behavior model of bidding agents raises interesting and important research questions to revisit the effectiveness of the existing mechanisms from the perspective of autobidding~\citep{aggarwal2019autobidding,balseiro2021landscape,deng2021towards,liaw2022efficiency,deng2022efficiency}. \citet{aggarwal2019autobidding} show that the PoA of running the second-price auctions with value maximizers is only $1/2$ in contrast to the well-known result that running the second-price auctions with utility maximizers is optimal in terms of social welfare. Recently, inspired by the shift from the second-price auctions to the first-price auctions in the display ad markets~\citep{paes2020competitive}, \citet{liaw2022efficiency} and \citet{deng2022efficiency} show that the PoA of running the first-price auctions with value maximizers is also $1/2$. However, the PoA of the widely adopted GSP auctions for value maximizers still remains open. In this paper, we aim to {\em characterize the PoA of running GSP auctions under autobidding}.

\subsection{Our Results}

In this paper, we characterize the PoA of running the GSP auctions for value maximizers. It turns out that the PoA can be as bad as $0$ in the worst case. In comparison, the PoA of running VCG (in particular, second price auctions) is $1/2$\footnote{The PoA being $1/2$ for VCG depends on the assumption that the bidders do not adopt dominated strategies. In the case when the bidders can adopt dominated strategies, the PoA of running VCG can be $0$. In contrast, PoA can be $0$ for GSP even assuming that the bidders do not adopt dominated strategies.}. However, the worst case instance for PoA being $0$ is unlikely to appear in practice as it contains an auction in which the ratio (a.k.a., discount factor) between the second slot and the first slot approaches $0$. 

Our main contribution in this paper is to provide a fine-grained characterization of PoA based on the sequence of discount factors (Theorem~\ref{thm:main}). Our PoA bounds are tight in the worst-in-class sense, i.e., among all instances that share the same bound as given by Theorem~\ref{thm:main}, there is one instance for which the bound is tight (Theorem~\ref{thm:tightness}). We complement our tight but complex bounds with slightly loose but simplified bounds (Corollary~\ref{cor:simplified}) that highlight the qualitative relation between the smoothness of the discount factors and the efficiency of the generalized second-price auction: The smoother the discount factors are, the better the PoA bound is. Our bounds are obtained by a novel analysis that (1) establishes the Pareto frontier of welfare decomposition between value and payment for each auction with a charging scheme tailored to the GSP auction, and (2) aggregates the Pareto frontiers across auctions in an optimal way.

\subsection{Further Related Work}

Following the seminal works in GSP auctions for utility maximizers from \citep{edelman2007internet,varian2007position,aggarwal2006truthful}, \citet{lahaie2006analysis} bounds its PoA for special cases in which click-through rates decay exponentially at a fixed rate; \citet{gomes2009bayes} characterize the existence of symmetric Nash equilibria in Bayesian settings with i.i.d. distributions. In addition to welfare performance, revenue performance of GSP auctions for utility maximizers has also been extensively studied in~\citet{thompson2013revenue,lucier2012revenue,hartline2014price}.

Motivated by the recent rapid adoption of autobidding in online advertising, there is a growing body of literature 
examining its impact on mechanism design in an autobidding world. In the seminal paper of \citet{aggarwal2019autobidding}, they show that uniform bidding is optimal for truthful auctions (with respect to quasi-linear utility maximizers), demonstrate the existence of equilibrium, and prove the price of anarchy results for truthful auctions. \citet{balseiro2021landscape,BalseiroDMMZ22} consider the setting of Bayesian mechanism design and provide the characterization of the revenue-optimal auctions under different information structure and budget constraints. In order to improve the PoA of the system, \citet{mehta2022auction} and \citet{liaw2022efficiency} show how to leverage randomization and non-truthfulness to improve welfare efficiency. When machine-learned advice approximating buyers’ values is given, using boosts
and reserves are provably shown to be effective in improving the welfare efficiency guarantees~\citep{BalseiroDMMZ21,deng2021towards,deng2022efficiency}.

\section{Preliminaries}

\paragraph{Ad auctions.}
Following prior work in autobidding~\citep{aggarwal2019autobidding,deng2021towards,liaw2022efficiency,deng2022efficiency}, we consider the following multi-auction model:
There are $n$ bidders participating in $m$ auctions, where we will generally use $i$ to index bidders, and $j$ to index auctions.
Without loss of generality, we assume each auction $j$ has $s$ winning slots (where normally $s < n$), and we generally use $k$ to index such slots.
In each auction $j$, each bidder $i$ has a value $v_{i, j}$, and each slot $k$ has a discount factor $d_{j, k}$.
The value that bidder $i$ receives when winning slot $k$ in auction $j$ is $v_{i, j} \cdot d_{j, k}$.
Without loss of generality, we assume that $d_{j, k} \ge d_{j, k + 1}$ for any $j \in [m]$ and $k \in [s - 1]$.
For notational simplicity, we also assume $d_{j, k} = 0$ for any $k > s$.

\paragraph{The generalized second-price auction.}
We focus on the generalized second-price auction in this paper:
Each bidder $i$ submits a single bid $b_{i, j}$ in each auction $j$.
Let $i(j, k)$ be the bidder with the $k$-th largest bid in auction $j$.\footnote{For simplicity, we assume ties are broken in favor of the bidder with the smaller index.}
Then for each $j \in [m]$ and $k \in [s]$, bidder $i(j, k)$ wins slot $k$ in auction $j$, receives value $\val_{i(j, k), j} = v_{i(j, k), j} \cdot d_{j, k}$ and pays the $(k + 1)$-th largest bid discounted by $d_{j, k}$, i.e., bidder $i(j, k)$ pays $p_{i(j, k), j} = b_{i(j, k + 1), j} \cdot d_{j, k}$.
We make a few remarks:
\begin{itemize}
    \item The above fully specifies the slot received and payment made by every bidder $i$, since the mapping $k \mapsto i(j, k)$ gives a permutation of the $n$ bidders $[n]$.
    \item $\{i(j, k)\}$, $\{\val_{i, j}\}_i$ and $\{p_{i, j}\}_i$ depend on $\{b_{i, j}\}_i$.
    We will make this dependence explicit as needed.
    \item In general, bidders may bid randomly, in which case $\{b_{i, j}\}$, $\{i(j, k)\}$, $\{\val_{i, j}\}$ and $\{p_{i, j}\}$ are all random variables.
    In such cases, $\{b_{i, j}\}$ must be independent across different bidders $i$.
\end{itemize}

\paragraph{ROI-constrained value-maximizing bidders.}
We consider autobidders, who are technically ROI-constrained value maximizers.
That is, each bidder aims to maximize the total value they receive, subject to the constraint that the total payment they make does not exceed the total value.
Formally, each bidder $i$ solves the following optimization problem when deciding her (possibly randomized) bidding strategy $\bb_i$ given all other bidders' bids $\bb_{-i}$:

\begin{align*}
    \max_{\bb_i} \qquad & \sum_{j \in [m]} \bE_{\bb_i, \bb_{-i}}[\val_{i, j}] \\
    \text{subject to} \qquad & \sum_{j \in [m]} \bE_{\bb_i, \bb_{-i}}[\val_{i, j}] \ge \sum_j \bE_{\bb_i, \bb_{-i}}[p_{i, j}].
\end{align*}

\paragraph{Equilibria and the price of anarchy.}
We study the worst-case efficiency of the generalized second-price auction in equilibrium, as measured by the price of anarchy.
In words, the price of anarchy is the ratio between the total value received by all bidders in the worst equilibrium, and the optimal social welfare disregarding incentive issues.
Formally, given $n$, $m$, the values $\{v_{i, j}\}$ and discount factors $\{d_{j, k}\}$, let $\opt_j$ be the contribution of auction $j$ to the optimal welfare, i.e.,
\[
    \opt_j = \sum_{k \in [s]} v_{i^*(j, k), j} \cdot d_{j, k},
\]
where $i^*(j, k)$ is the bidder with the $k$-th largest value in auction $j$.
We are interested in the worst-case price of anarchy (PoA) over bidders' values, i.e.,
\begin{align*}
    \poa(m, s, \{d_{j, k}\}) = \inf_{n, \{v_{i, j}\}, \{\bb_i\}_i\text{ form an equilibrium}} \frac{\sum_{i \in [n], j \in [m]} \bE[\val_{i, j}]}{\sum_{j \in [m]} \opt_j}.
\end{align*}

\section{The Analysis at a Glance}
\label{sec:high-level-overview}

\paragraph{Limitations of and insights from existing approaches.}
First let us review existing approaches of analyzing the PoA bounds under autobidding.
Key to many existing methods is the following simple observation: The ROI constraints guarantee that the total amount that all bidders pay is a lower bound of the total value that all bidders receive.
So, any lower bound on the total payment is also a lower bound on the total value.
Then, to lower bound the total value, one only needs to establish ``substituting'' lower bounds on the total value and the total payment respectively, and argue that one of the two lower bounds must be large enough.

A simple concrete example is the analysis of the second-price auctions with the no-underbidding assumption (as underbidding is a dominated strategy for value maximizers in the second-price auctions)~\citep{aggarwal2019autobidding,deng2021towards,BalseiroDMMZ21}:
Suppose there is only one slot that matters in each auction, i.e., $d_{j, 1} = 1$ and $d_{j, k} = 0$ for all $k > 1$.
Moreover, suppose all bidders bid deterministically, and never bid below their values, i.e., $b_{i, j} \ge v_{i, j}$ for all $i$ and $j$.
Then the following argument shows the PoA is at least $1/2$:
Consider each auction $j$.
If the bidder with the highest value (henceforth the rightful winner) in $j$ also has the highest bid, then the rightful winner wins in $j$, and the value they receive is precisely the contribution of $j$ to the optimal welfare. 
Otherwise, since the rightful winner bids at least their value, the actual winner must pay at least the rightful winner's value, which is the contribution of $j$ to the optimal welfare. So $j$ contributes at least its contribution to the optimal welfare, either to the total value or to the total payment (note that we are relaxing the contribution and considering lower bounds of the two quantities).
It is not hard to see that the worst-case situation is when the contributions of all auctions are equally split between the value and the payment, in which case we get $1/2$ of the optimal welfare.
It turns out this ratio is tight for the second-price auction. (See proof of Theorem 3.3 in~\citep{deng2021towards} for more details).

However, this argument no longer works as underbidding is no longer a dominated strategy for value maximizers in GSP auctions.
In order to obtain the tight PoA for the generalized second-price auction, we aim to develop new techniques to establish optimal tradeoffs between value and payment lower bounds.

\paragraph{Our approach.}
We break down the analysis into two parts: (1) establishing the optimal tradeoff between the contributions to the value and the payment of each individual auction, and (2) aggregating over all auctions in the optimal way to obtain the tight worst-case lower bound on the welfare.
In the first part, we characterize the Pareto frontier between the contributions to the value and the payment of each auction.
This is done by considering which slot each bidder would get if they bid their true value (a {\em phantom} bidding strategy), fixing other bidders' bids. The value obtained by the bidder under the phantom bidding strategy is denoted by {\em proxy value}.
When aggregated over auctions, the total proxy value each bidder receives under this phantom bidding strategy is a lower bound of that bidder's contribution to the value, since bidding true values in all auctions is a feasible bidding strategy (i.e., the ROI constraint is satisfied).
Moreover, such phantom bidding strategies also give lower bounds on the contribution to the payment: For example, if a bidder with value $v$ would get the $k$-th slot by bidding $v$, then the $(k - 1)$-th largest bid must be at least $v$, and by the payment rule, the top $k - 2$ winners must each pay at least $v$ discounted by the corresponding slot's discount factor.
We observe that the tradeoff is dictated by the slot that the bidder with the highest value would win when bidding their true value.
This means there are $s + 1$ cases that we need to consider, which correspond to the $s + 1$ points (one of which is trivial) on the Pareto frontier.
We bound the coordinates of each of these points using a charging scheme tailored to the generalized second-price auction (see Fig~\ref{fig:charging1} and~\ref{fig:charging2} for visual demonstrations).

Once we have the Pareto frontiers of all auctions, we can proceed to the second part of the analysis, which is aggregating them in the optimal way.
This involves two conceptual steps: characterizing the worst-case distribution of the optimal welfare into individual auctions, and on top of that, determining the worst-case split between the contribution to the value and the contribution to the payment in each auction.
From a geometric point of view, both steps boil down to the same operation: taking the convex closure of the optimal tradeoffs obtained in the first part of the analysis.
The tight PoA ratio then corresponds to the (lower) intersection point of this convex closure with the line $y = x$.
Then, exploiting the structure of the Pareto frontier curves, we obtain an explicit formula for the outcome of the optimal way of aggregation, which gives the tight PoA of the generalized second-price auction (see Fig~\ref{fig:aggregation} for visual demonstrations).
This is further complemented with hard instances showing each term in our bound is necessary.
Our high-level approach can be adapted to other auction formats with local changes, which may be of broader interest.

\section{Bounding the Contribution of an Individual Auction}

In this subsection, we fix any bidding strategies $\bb = \{b_{i, j}\}$ (not necessarily forming an equilibrium; we will use the equilibrium condition later), and bound the contribution of each auction to the proxy value and payment.
In doing so, we will consider the proxy value each bidder $i$ receives in each auction $j$ under the phantom bidding strategy where they bid their true values, i.e., $\val_{i, j}(b_{i, j} = v_{i, j}, \bb_{-i, j})$.
For brevity, we let $\pval_{i, j} = \val_{i, j}(b_{i, j} = v_{i, j}, \bb_{-i, j})$.
Note that $\pval_{i, j}$ is generally a random variable.
By default, when we write $\val_{i, j}$ or $p_{i, j}$, they are induced by the fixed bidding strategies $\bb$.
We will replace this proxy value with the actual value received by each bidder later after we aggregate the contributions of all auctions.

The main claim we prove in this subsection is the following lemma, which characterizes the worst-case tradeoff between each auction's contribution to the (proxy) value and its contribution to the payment.
\begin{lemma}
\label{lem:contribution}
    Fix any bidding strategies $\bb$.
    For each auction $j$ and each $k \in [s]$, let
    \[
        q^{j, k} = \left(\frac{d_{j, k + 1}}{\sum_{\ell \le k} d_{j, \ell}}, \frac{\sum_{\ell < k} d_{j, \ell}}{\sum_{\ell \le k} d_{j, \ell}}\right) \in \bR_+^2.
    \]
    Moreover, let $\Delta^1 = \{(t_1, t_2) \in \bR_+^2 \mid t_1 + t_2 = 1\}$ be the standard $1$-simplex.
    Then, for each auction $j$, there exists $k \in [s]$, $r \in \Delta^1 \subseteq \bR_+^2$, and $\alpha \in [0, 1]$, such that
    \begin{itemize}
        \item The total proxy value contributed by auction $j$ is at least
        \[
            \sum_{i \in [n]} \pval_{i, j} \ge (\alpha \cdot q_1^{j, k} + (1 - \alpha) \cdot r_1) \cdot \opt_j.
        \]
        \item The total payment contributed by auction $j$ is at least
        \[
            \sum_{i \in [n]} p_{i, j} \ge (\alpha \cdot q_2^{j, k} + (1 - \alpha) \cdot r_2) \cdot \opt_j.
        \]
    \end{itemize}
\end{lemma}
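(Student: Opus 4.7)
I would fix an auction $j$ and perform a case analysis based on the slot $k^\star \in \{1, 2, \ldots, s+1\}$ that the highest-value bidder $i^\star := i^*(j, 1)$ (with value $v^\star := v_{i^\star, j}$) would win in auction $j$ by bidding their true value $v^\star$ against the fixed opposing bids $\bb_{-i^\star, j}$. This yields exactly the $s+1$ cases foreshadowed in the Pareto-frontier discussion of Section~\ref{sec:high-level-overview}, one of which ($k^\star = s+1$) will be trivial.

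Two structural inequalities would drive each case. First, by the definition of $k^\star$, $\pval_{i^\star, j} = v^\star \cdot d_{j, k^\star}$, so $\sum_i \pval_{i, j} \ge v^\star d_{j, k^\star}$. Second, since $i^\star$'s phantom rank is $k^\star$, there must be at least $k^\star - 1$ bids in $\bb_{-i^\star, j}$ of value $\ge v^\star$; those same bidders appear with unchanged bids in the original profile, so the GSP payment rule forces the top slots (up to slot $k^\star - 2$, or slot $k^\star - 1$ in the sub-case $b_{i^\star, j} \ge v^\star$) to each pay at least $v^\star d_{j, \ell}$, giving a payment lower bound of $v^\star \sum_{\ell < k^\star - 1} d_{j, \ell}$ (or $v^\star \sum_{\ell < k^\star} d_{j, \ell}$ in that sub-case).

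For $k^\star = s + 1$ (the trivial case), there are at least $s$ bids $\ge v^\star$ in the original profile, so $\sum_i p_{i, j} \ge v^\star \sum_{\ell < s} d_{j, \ell}$; combined with the elementary bound $v^\star \ge \opt_j / \sum_\ell d_{j, \ell}$, this is dominated by $q^{j, s}$ itself (take $k = s$, $\alpha = 1$, noting $q^{j, s}_1 = 0$ makes the value bound vacuous). For $k^\star \in [s]$, I would set $k = k^\star$, normalize the lower bounds by $\opt_j$, and solve a small linear system for $\alpha \in [0, 1]$ and $r \in \Delta^1$ so that the lower-bound point dominates $(\alpha q^{j, k^\star} + (1-\alpha) r) \opt_j$ componentwise. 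Feasibility hinges on the geometric fact that $q^{j, k^\star}$ lies strictly below $\Delta^1$, so the triangle it spans with $\Delta^1$ covers the entire achievable tradeoff region.

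The main obstacle is that the tight form $q^{j, k}_1 = d_{j, k+1}/\sum_{\ell \le k} d_{j, \ell}$, with the \emph{shifted-down} discount $d_{j, k+1}$ in the numerator, suggests that the bare $v^\star d_{j, k^\star}$ bound is slightly too weak on its own and must be sharpened, either by aggregating proxy value from additional bidders' phantoms, or by reallocating slack from the payment side to the value side through a GSP-specific charging scheme tailored to the payment rule. Carefully disentangling the sub-cases on $b_{i^\star, j}$ (underbidding vs.\ overbidding by the highest-value bidder) and verifying the convex-combination form case by case is where I expect the bulk of the technical effort to lie.
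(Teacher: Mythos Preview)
Your case split on the slot $k^\star$ that the top-value bidder would win under the phantom strategy is exactly the right organizing principle, and your handling of $k^\star = s+1$ is fine. However, there are two genuine problems.

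First, an index mismatch: when the top-value bidder would win slot $k^\star \ge 2$, the paper takes the lemma's $k$ to be $k^\star - 1$, not $k^\star$. With that shift, $q^{j,k}_1 = d_{j,k+1}/\sum_{\ell\le k}d_{j,\ell}$ has $d_{j,k+1} = d_{j,k^\star}$ in the numerator, which matches your value bound $v^\star d_{j,k^\star}$ exactly, and $q^{j,k}_2$ requires only $\sum_{\ell < k^\star - 1} d_{j,\ell}$ in payments, which matches your payment bound. So your ``shifted-down discount'' puzzle is an artifact of choosing the wrong $k$.

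Second, and more seriously, the lossless component $r$ cannot be manufactured from bounds involving only $v^\star$. Take $k^\star = 1$ with all discount factors equal to $1$ and all bidders' values equal to $v^\star$: your two structural inequalities give $\sum_i \pval_{i,j} \ge v^\star$ and $\sum_i p_{i,j} \ge 0$, while $\opt_j = s\,v^\star$. No choice of $k$, $\alpha$, $r$ makes the lemma hold from these numbers alone when $s \ge 2$, because every $q^{j,k}$ has $q^{j,k}_1 + q^{j,k}_2 \ge (s-1)/s > 1/s$. What the paper actually does in this case is set $\alpha = 0$ and prove $\sum_i \pval_{i,j} + \sum_i p_{i,j} \ge \opt_j$ directly, via a charging argument that uses the proxy values of \emph{every} bidder $i \in [s]$: if bidder $i$ would phantom-win slot $k' > i$, the shortfall $v_{i,j}(d_{j,i}-d_{j,k'})$ is covered by charging it to the payments of slots $1,\dots,k'-i$, using that the top $k'-1$ bids exceed $v_{i,j}$. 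The same scheme, applied to bidders $k+1,\dots,s$ and slots $k,\dots,s$, furnishes the $r$ component in the general case. Your last paragraph correctly flags that ``aggregating proxy value from additional bidders' phantoms'' is needed, but this is not a sharpening of the $v^\star$ bound; it is the entire content of the $r$ component, and without it the proof does not close.
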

Before diving into the proof, first let us understand the lemma.
Intuitively, the lemma says that if (without loss of generality) the contribution of an auction $j$ to the optimal welfare is $\opt_j = 1$, then $j$'s contributions to the total proxy value and the total payment are lower bounded by the two coordinates of $\alpha \cdot q^{j, k} + (1 - \alpha) \cdot r \in \bR_+^2$ respectively.
This point is a convex combination of $q^{j, k}$ and $r$.
Observe that $\|q_{j, k}\|_1 \le 1$ for all $j$ and $k$, so $q^{j, k}$ generally corresponds to the lossy component of the bound, whereas $r \in \Delta^1$ generally corresponds to the lossless component.
Intuitively, the lossless component $r$ would not hurt the efficiency.
Indeed, as we will see later, the lossy component $q^{j, k}$ dominates the PoA of the generalized second-price auction.

It may appear that the above lemma is unnecessarily complicated by the way it is presented (e.g., the use of points in $\bR_+^2$), but we will see how this helps in the next part of the analysis where we aggregate the contributions.
Also note that the lemma provides bounds on random variables, which is intentional. The rest of the subsection is devoted to the proof of the lemma.

\begin{proof}[Proof of Lemma~\ref{lem:contribution}]
    Fix some auction $j$.
    Without loss of generality, suppose $i^*(j, k) = k$ for each $k \in [n]$, i.e., bidder $k$ has the $k$-th largest value.

    \paragraph{Warm-up: when bidder $1$ would win the first slot.}
    We first consider the case where bidder $1$ would win the first slot when bidding their true value.
    In this case, we argue that the contribution of auction $j$ to the value and the payment together is at least $\opt_j$.
    This corresponds to $\alpha = 0$ in the statement of the lemma.
    We only need to show there exists $r \in \Delta^1$ that satisfies the conditions in the lemma.
    This is equivalent to the following:
    \[
        \sum_{i \in [n]} \pval_{i, j} + \sum_{i \in [n]} p_{i, j} \ge \opt_j.
    \]
    We give a charging argument that implies the above, which also serves as a warm-up for the more involved case that we will discuss later. Recall that $\opt_j = \sum_{k \in [s]} v_{i^*(j, k), j} \cdot d_{j, k}$. Intuitively, in a charging argument, we would like to cover each $v_{i^*(j, k), j} \cdot d_{j, k}$ by two parts: the proxy value obtained by bidder $i^*(j, k)$, i.e., $\pval_{i^*(j, k),j}$, and a fraction of total payment $\sum_{i \in [n]} p_{i, j}$ in auction $j$. The goal of a charging argument is to properly select $x_k$ for each $k$ such that $\pval_{i^*(j, k),j} + x_k \geq v_{i^*(j, k), j} \cdot d_{j, k}$ while $\sum_{k} x_k$ not exceeding the total payment $\sum_{i \in [n]} p_{i, j}$. In particular, we will say ``we charge $x_k$ to bidder $i^*(j, k)$''.

    \begin{figure*}[t]
    \centering
    \includegraphics[width=.45\linewidth]{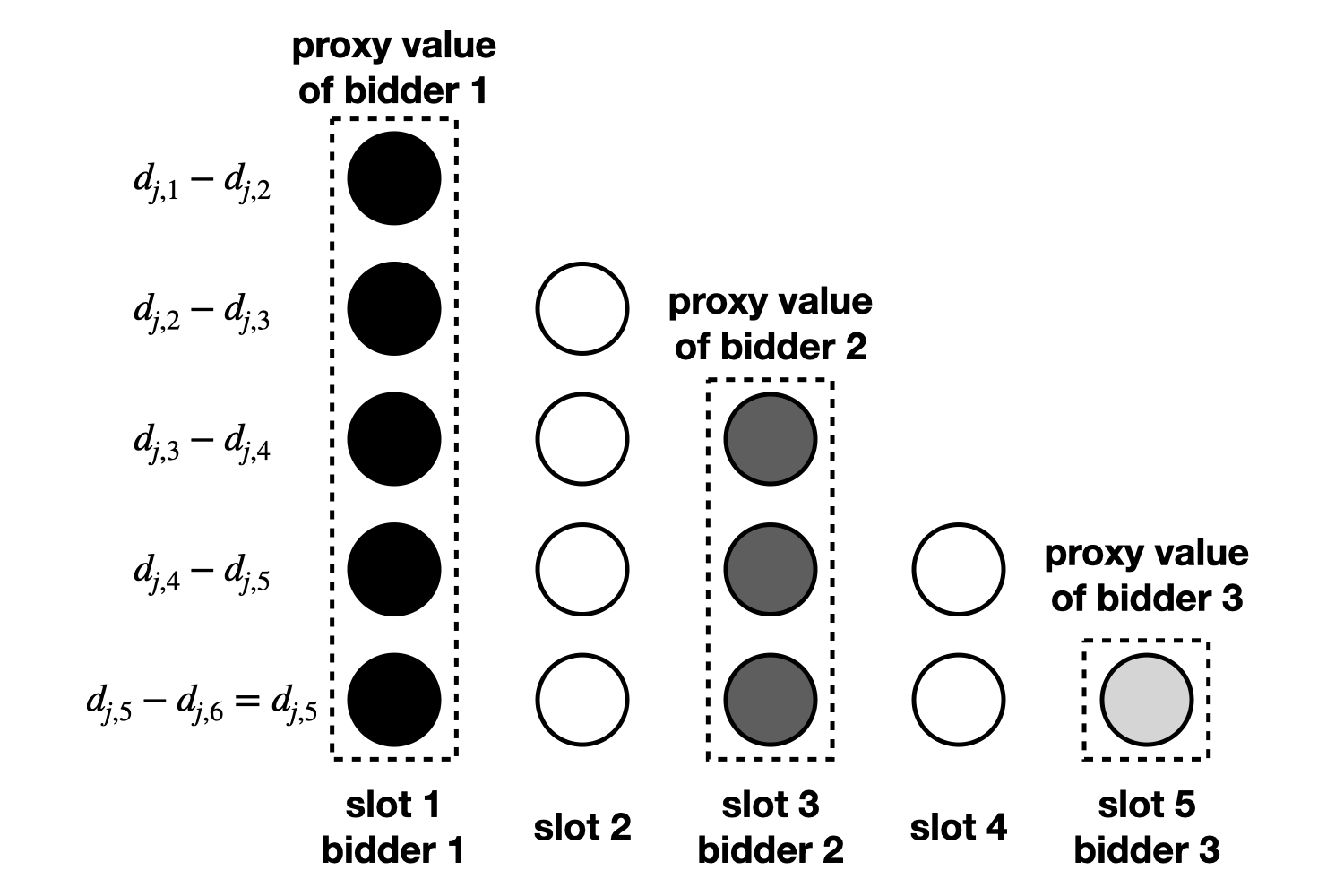}
    \includegraphics[width=.45\linewidth]{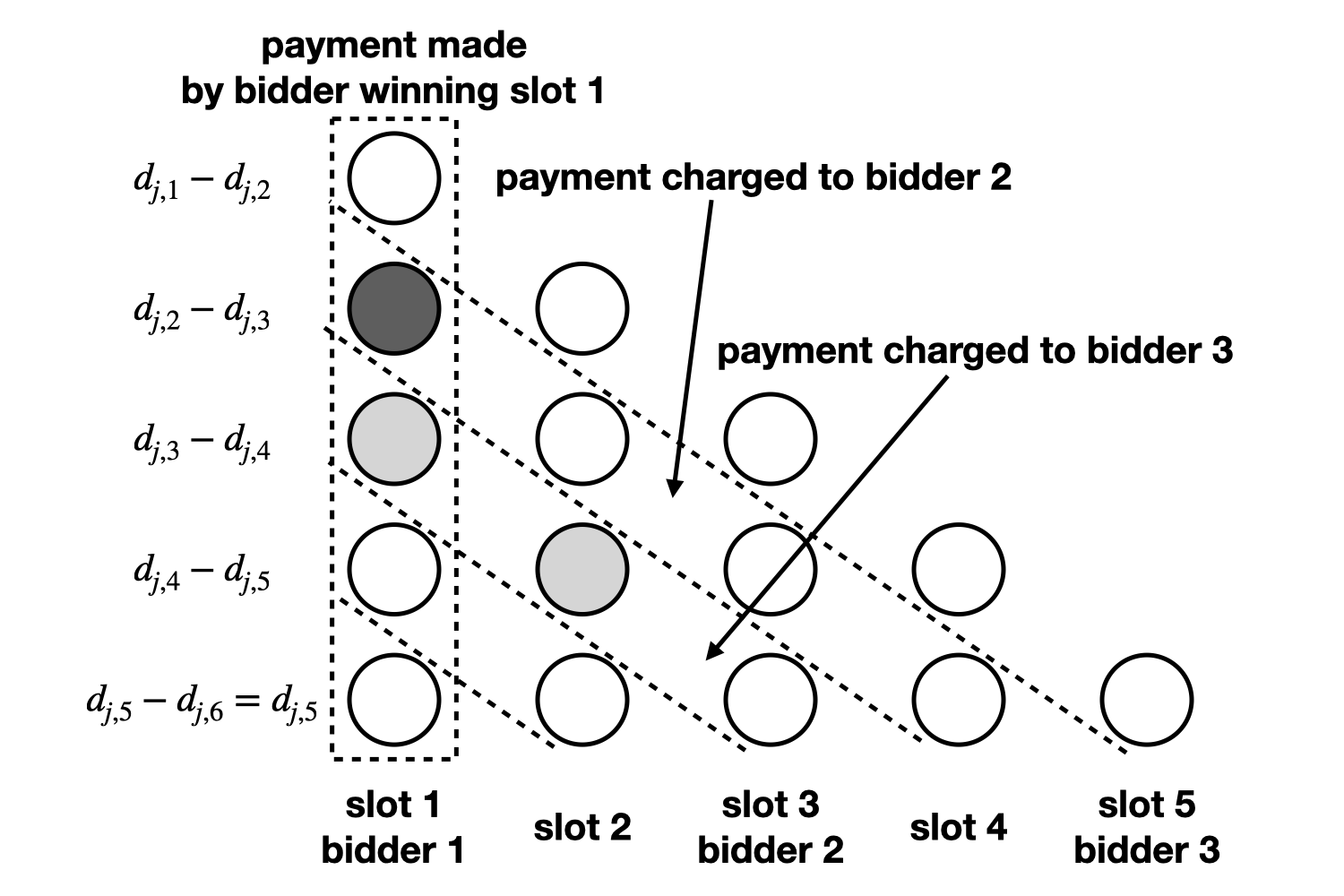}
    \includegraphics[width=.45\linewidth]{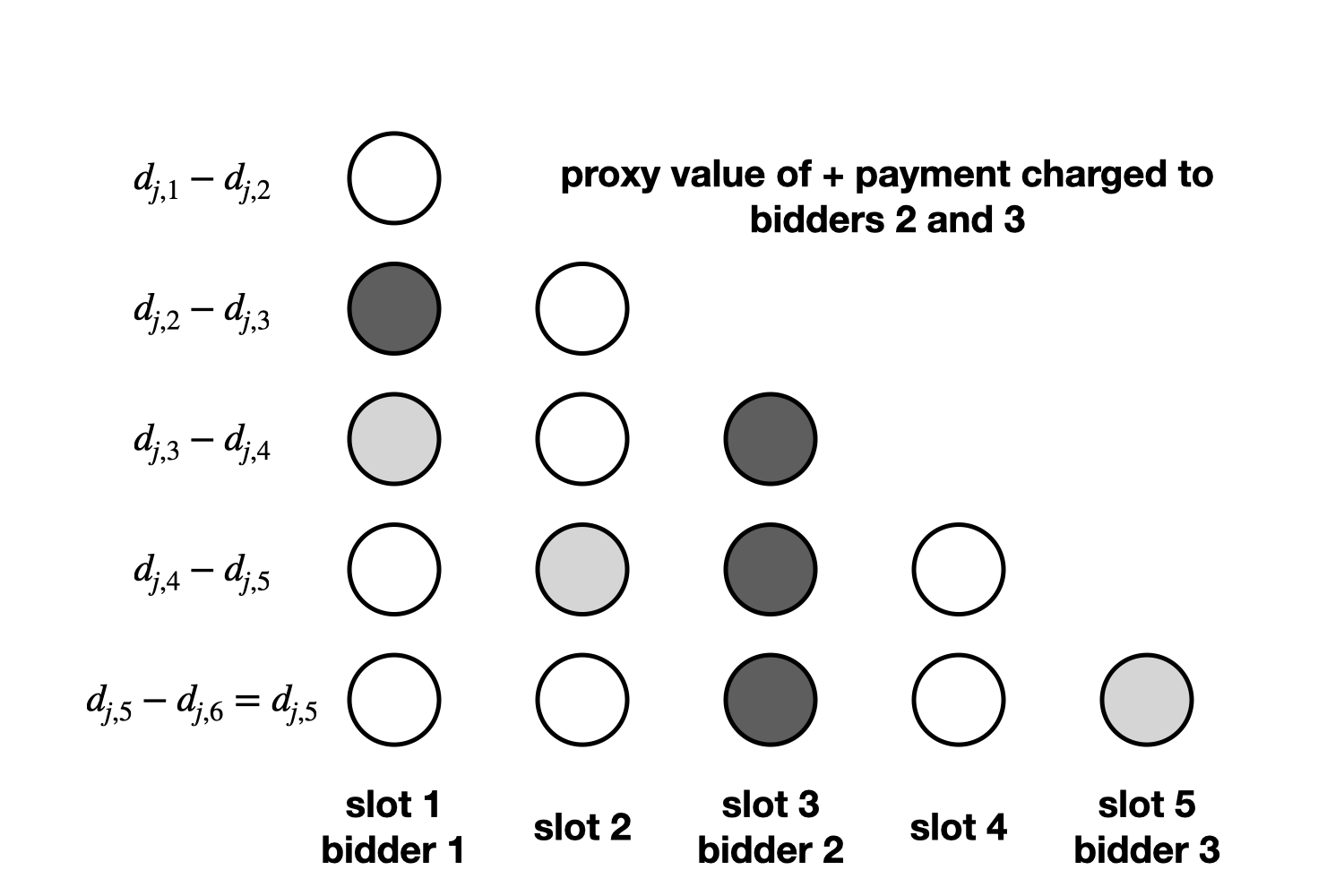}
    \includegraphics[width=.45\linewidth]{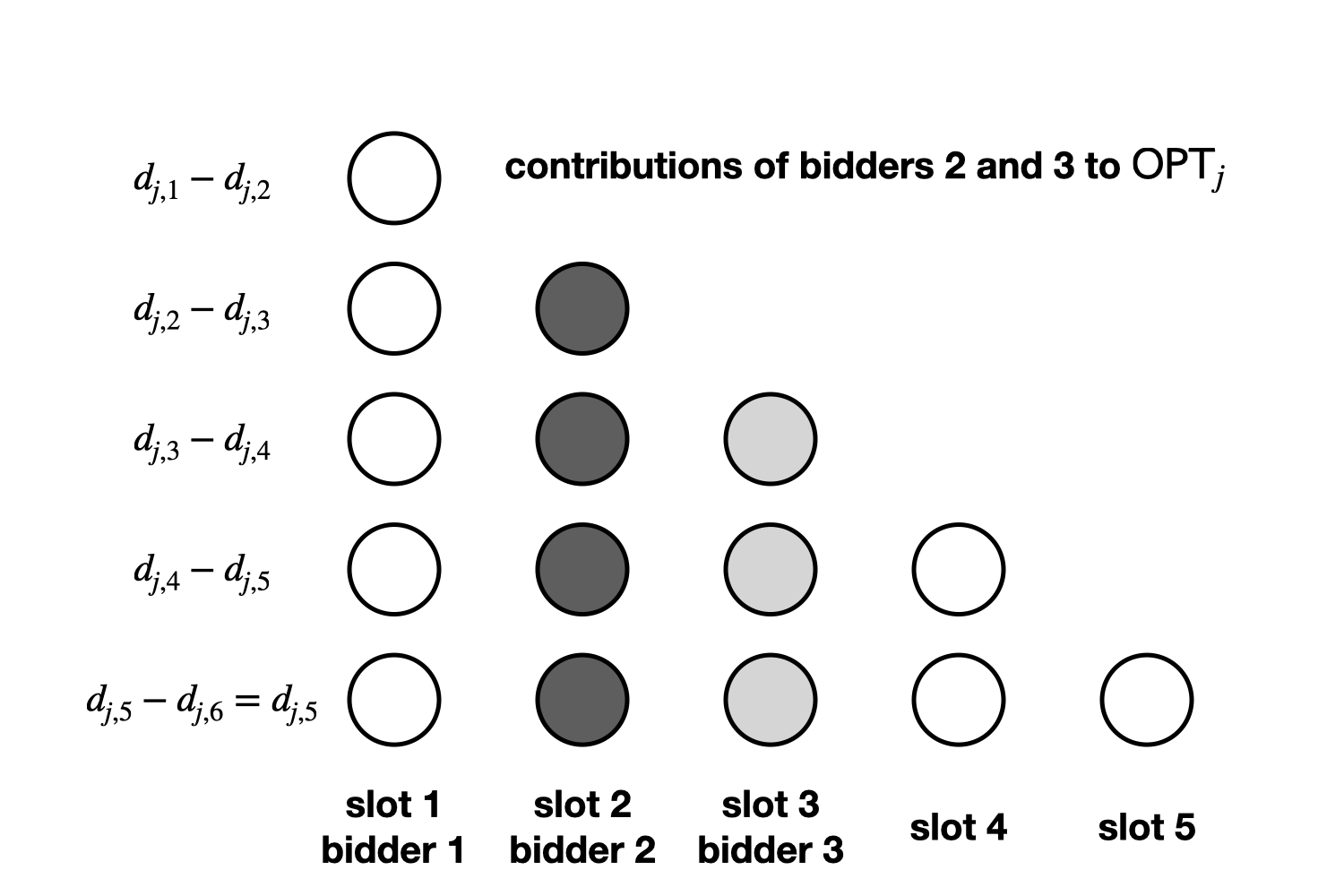}
    \caption{Charging scheme used in the case where bidder $1$ would win the first slot.}
    \label{fig:charging1}
    \end{figure*}

    Our charging scheme is illustrated in Figure~\ref{fig:charging1}.
    The idea is to differentiate the discount factors into ``mass points'', and consider the contribution to $\opt_j$ of every mass point.
    For example, the top left mass point in each subfigure corresponds to $d_{j, 1} - d_{j, 2}$.
    The mass points in each row are equivalent to each other.
    We use the color of a mass point to represent the magnitude of the value or payment associated with this mass point before discounting.
    For example, the first column in the top left subfigure is black, which means the summation of the values associated with black points in the first column is $v_{1, j}$, the largest value in auction $j$.
    Similarly, $v_{2, j}$ and $v_{3, j}$ correspond to dark grey and light grey respectively.
    Since bidder $1$ would win the first slot when they bid their true value, the proxy value of bidder $1$ in auction $j$ is the contribution of this column, which is
    \[
        v_{1, j} \cdot [(d_{j, 1} - d_{j, 2}) + (d_{j, 2} - d_{j, 3}) + (d_{j, 3} - d_{j, 4}) + (d_{j, 4} - d_{j, 5}) + d_{j, 5}] = v_{1, j} \cdot d_{j, 1}.
    \]
    This is the same as bidder $1$'s contribution to $\opt_j$, which is always true in the case under discussion.
    As a result, we only need to bound the proxy values of other bidders.

    To illustrate the idea of the charging scheme, assume there are only $3$ bidders with positive values in auction $j$ (we will give a fully general argument later).
    Moreover, bidders $2$ and $3$ would win slots $3$ and $5$ when they bid their true values respectively.
    We show that the proxy value of bidders $2$ and $3$ in auction $j$ plus the payment made in auction $j$ is at least the contribution of bidders $2$ and $3$ to $\opt_j$.
    Take bidder $2$ for example.
    Bidder $2$'s proxy value is $v_{2, j} \cdot d_{j, 3}$, and their contribution to $\opt_j$ is $v_{2, j} \cdot d_{j, 2}$, which means in order to cover bidder $2$'s contribution to $\opt_j$, we need to charge $v_{2, j} \cdot (d_{j, 2} - d_{j, 3})$ from others' payment to bidder $2$.
    Here, we need to pay special attention to two things: (1) we should never charge the same mass point to two bidders,\footnote{Note that we can still charge a mass point even if it already contributes to the proxy value of a bidder.  This is not illustrated in the example in Figure~\ref{fig:charging1}.} and (2) the payment associated with any mass point should never be less than the value of the bidder to which the mass point is charged.
    To guarantee (1), we divide the triangle of mass points into diagonal sequences, and charge only points in the $i$-th sequence from the top to bidder $i$, as illustrated in the top right subfigure.
    More specifically, we charge any mass point in the sequence that is equivalent to a mass point in the difference between the contribution to $\opt_j$ and the proxy value of a bidder to that bidder.

    To see why (2) is guaranteed, observe that any mass point in this difference must lie to the left of the column corresponding to the slot that the bidder would win, with at least one column in between separating the two.
    Again, take bidder $2$ for example.
    As illustrated in the top right subfigure, the rightmost (and only) point in the difference within the second diagonal sequence is the second point in the first column in dark grey, whereas the column corresponding to the slot that bidder $2$ would win is the third column, with the second column in between separating the two.
    We know that the second highest bidder under the fixed bidding strategy $\bb$, whose slot corresponds to the second column, bids at least $v_{2, j}$.
    By the payment rule, this means the payment before discounting made by the bidder with the highest bid, whose slot corresponds to the first column, is at least $v_{2, j}$.
    So, property (2) holds when charging the second point in the first column to bidder $2$.
    Similarly, one can check that the light grey points charged to bidder $3$ also respect property (2).
    After performing the above charging, the total proxy value received by bidders $2$ and $3$ and the payment charged to the same bidders are illustrated in the bottom left subfigure.
    One can compare this to the contribution of these two bidders to $\opt_j$ illustrated in the bottom right subfigure, and conclude that they are equivalent.

    Now we present a fully general argument.
    For each bidder $i \in [s]$, we argue that the proxy value plus the payment charged to $i$ is at least $i$'s contribution to $\opt_j$.
    Suppose the slot $i$ would win when bidding $v_{i, j}$ is $k$ (where $k = s + 1$ if $i$ would not win any of the $s$ slots).
    Consider the following cases:
    \begin{itemize}
        \item If $i = 1$, then $k = 1$, and $\pval_{i, j} = v_{1, j} \cdot d_{j, 1}$ is precisely $i$'s contribution to $\opt_j$.
        \item If $i > 1$ and $k \le i$, then $\pval_{i, j}$ is at least $i$'s contribution to $\opt_j$, i.e.,
        \[
            \pval_{i, j} = v_{i, j} \cdot d_{j, k} \ge v_{i, j} \cdot d_{j, i}.
        \]
        \item If $i > 1$ and $k > i$, we need to charge at least $v_{i, j} \cdot (d_{j, i} - d_{j, k})$ to $i$ in terms of payment.
        The mass points we charge to $i$ are $\{(1, i), (2, i + 1), \dots, (k - i, k - 1)\}$, where the two coordinates correspond to the indices of the column (from the left) and the row (from the top), respectively.
        Observe that the largest $k - 1$ bids are at least $i$'s value $v_{i, j}$; otherwise, $i$ would win a slot $<k$ using the phantom bidding strategy. Therefore, for any $\ell < k - 1$, the payment before discounting made by $i(j, \ell)$ has the following lower bound: 
        \[
            p_{i(j, \ell), j} / d_{j, \ell} = b_{i(j, \ell + 1), j} \ge b_{i(j, k - 1), j} \ge v_{i, j}.
        \]
        So the total payment charged to $i$ is
        \begin{align*}
            & \sum_{1 \le \ell \le k - i} (p_{i(j, \ell), j} / d_{j, \ell}) \cdot (d_{j, i - 1 + \ell} - d_{j, i + \ell}) \\
            \ge\ & \sum_{1 \le \ell \le k - i} v_{i, j} \cdot (d_{j, i + 1 - \ell} - d_{j, i + 2 - \ell}) \\
            =\ & v_{i, j} \cdot (d_{j, i} - d_{j, k}),
        \end{align*}
        which is the difference to be filled between $i$'s contribution to $\opt_j$ and $i$'s proxy value in $j$.
    \end{itemize}
    Finally, observe that no mass point is used twice across different bidders in the third case above.
    This means
    \[
        \sum_{i \in [n]} \pval_{i, j} + \sum_{i \in [n]} p_{i, j} \ge \opt_j,
    \]
    which establishes the lemma in the case where $v_{1, j} \ge b_{i(j, 1), j}$.

    \paragraph{When bidder $1$ would not win the first slot.}
    Now consider the more challenging case where $v_{1, j} < b_{i(j, 1), j}$, i.e., bidder $1$'s value is smaller than the highest bid.
    Suppose bidder $1$ would win the $(k + 1)$-th slot if they bid their true value (where $k = s$ if bidder $1$ would not win any of the $s$ slots).
    Then the largest $k$ bids are at least $v_{1, j}$.
    We will argue about the contribution of bidders $1, \dots, k$ and the contribution of bidders $k + 1, \dots, n$ separately, using different charging schemes.
    In particular, we will cover most of the contribution of the first $k$ bidders with highest values to $\opt_j$ using the proxy value of bidder $1$ and the payment made by the bidders who win the first $k - 1$ slots under $\bb$.
    This corresponds to $q^{j, k}$ in the statement of the lemma.
    Then, for the other $n - k$ bidders, we reuse the charging scheme in the first case of the proof, and show that the proxy value of these bidders and the payment made by the bidders who win the other $s - k + 1$ slots under $\bb$ together cover the full contribution of these $n - k$ bidders.
    This corresponds to $r$ in the statement of the lemma.
    The parameter $\alpha \in [0, 1]$ then captures how $\opt_j$ is split between these two sets of bidders.

    \begin{figure*}[t]
    \centering
    \includegraphics[width=.45\linewidth]{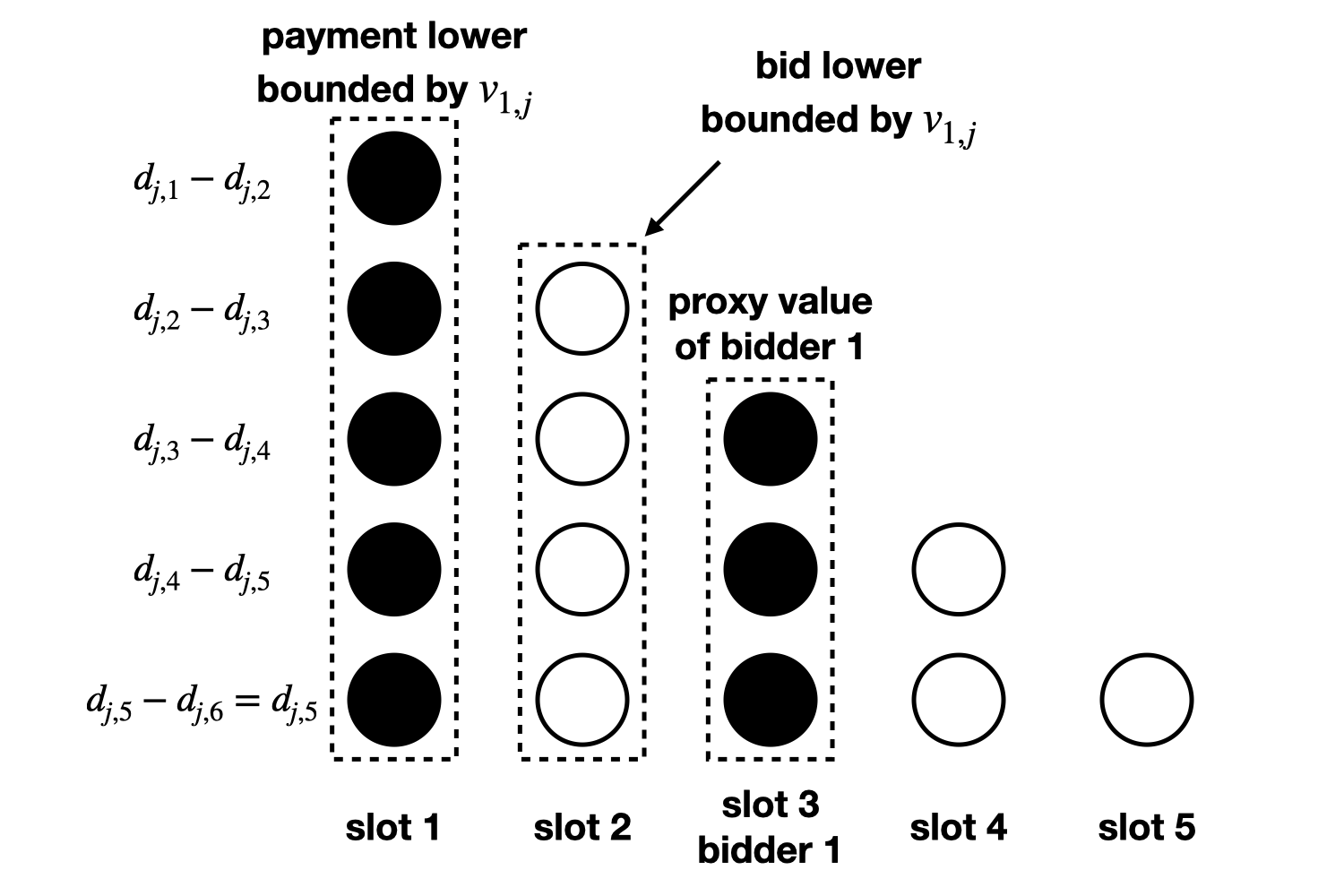}
    \includegraphics[width=.45\linewidth]{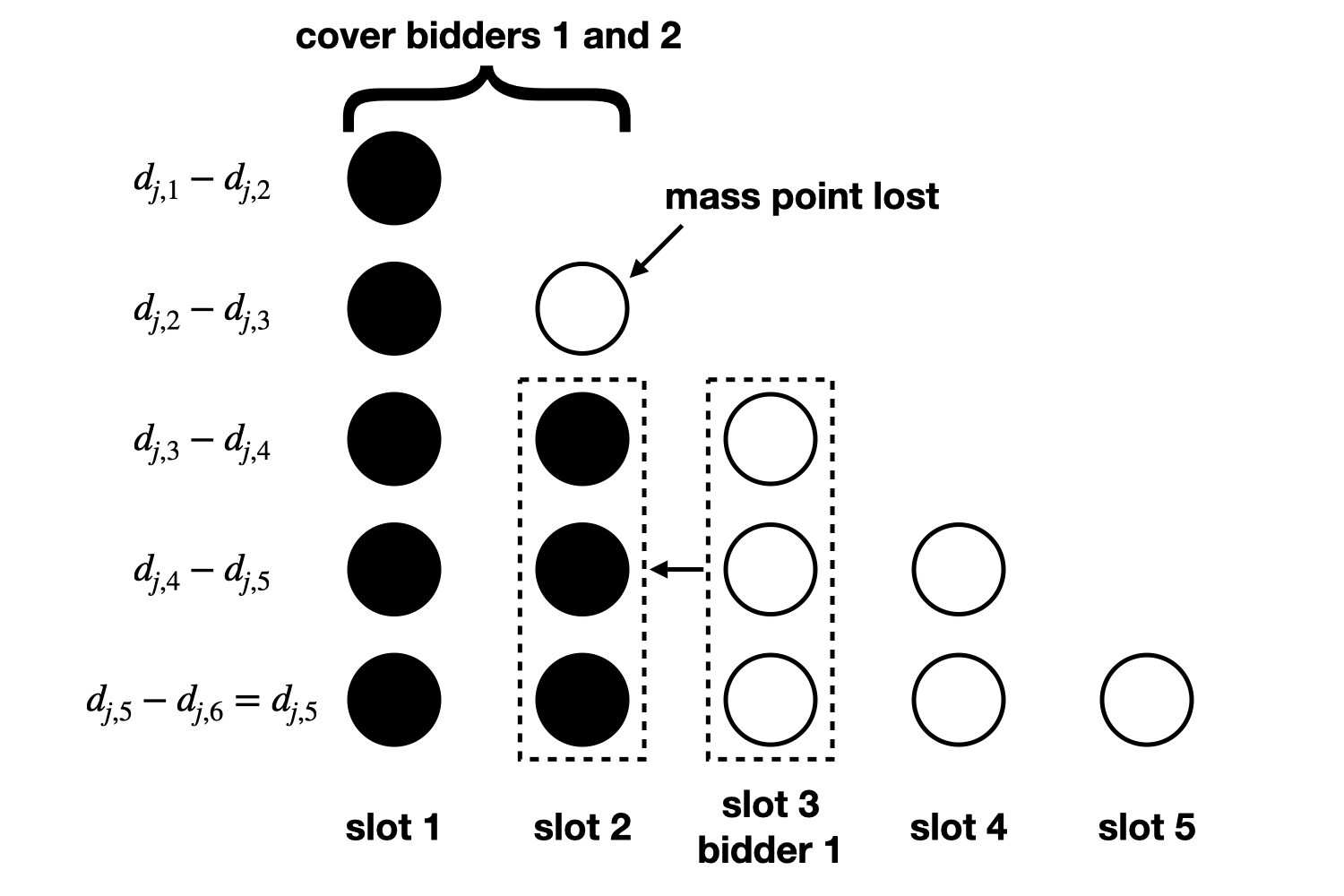}
    \includegraphics[width=.45\linewidth]{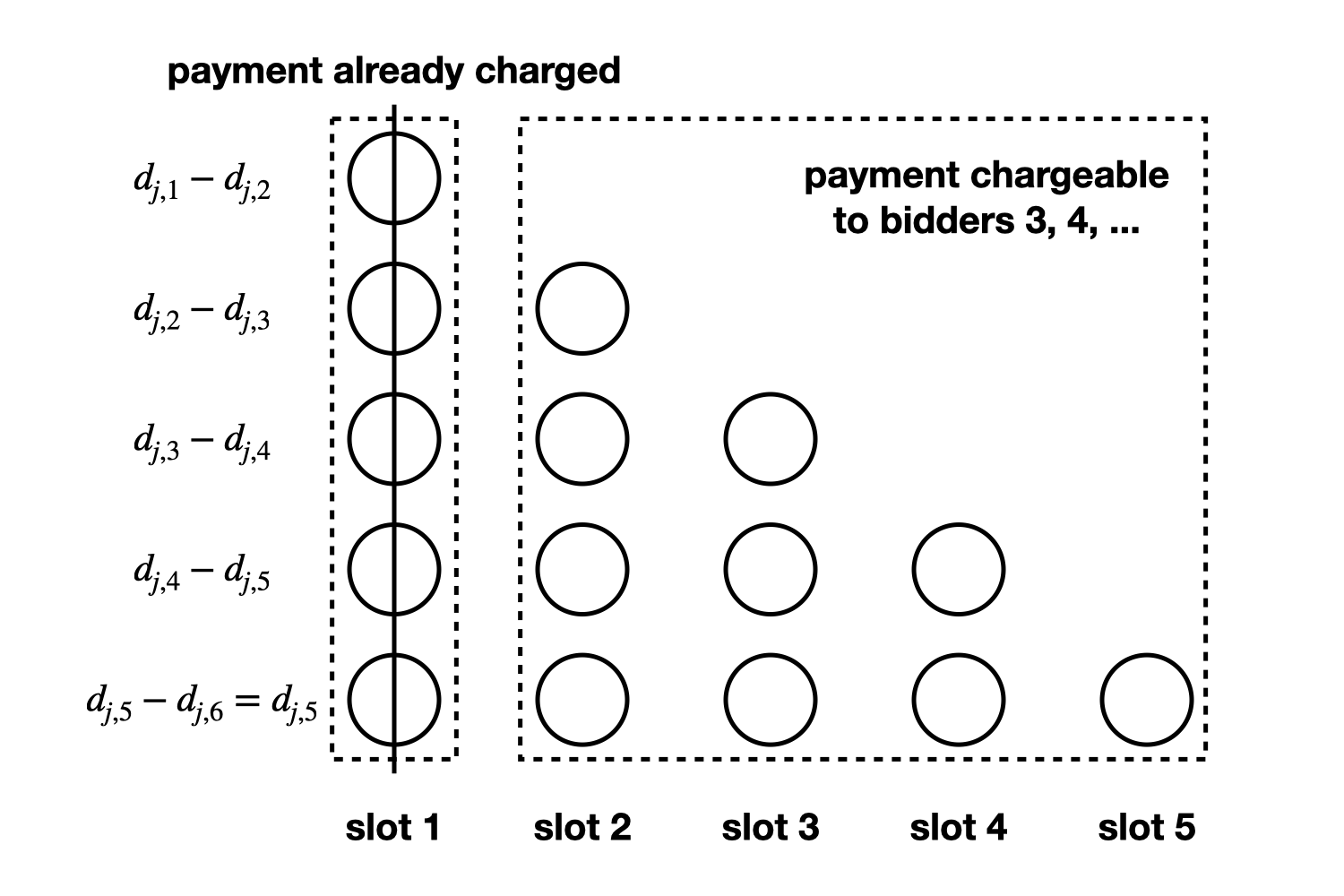}
    \includegraphics[width=.45\linewidth]{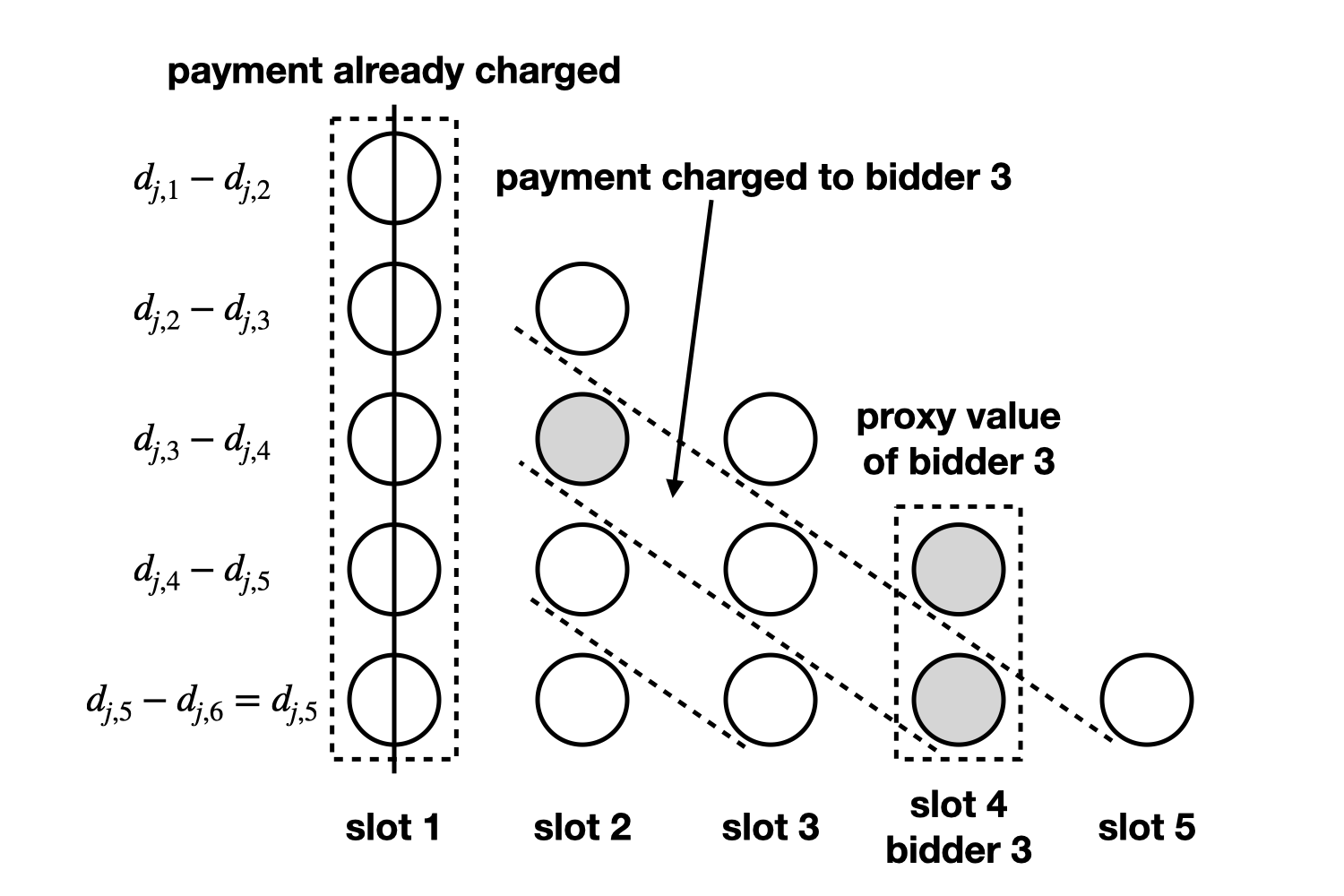}
    \caption{Charging scheme used in the case where bidder $1$ would not win the first slot.}
    \label{fig:charging2}
    \end{figure*}

    For the first $k$ bidders, we use a new charging scheme, which is depicted in Figure~\ref{fig:charging2}.
    Again, we first present the argument in an intuitive way using a simple example.
    Suppose $k = 2$, i.e., bidder $1$ would win slot $3$ when bidding $v_{1, j}$.
    As illustrated in the top left subfigure, this means two things: (1) the proxy value of bidder $1$ is $v_{1, j} \cdot d_{j, 3}$, and (2) the top $2$ bids under $\bb$ (corresponding to slots $1$ and $2$) are at least $v_{1, j}$, so the top $1$ payment (corresponding to slot $1$) is at least $v_{1, j}$.
    To see how this almost covers the contribution of bidders $1$ and $2$ to $\opt_j$, we shift the proxy value in the third column to the left, as shown in the top right subfigure.
    Then, the payment in the first column covers the full contribution of bidder $1$.
    As for bidder $2$, the worst-case situation is where $v_{2, j} = v_{1, j}$, in which case the black points in the second column covers the contribution of bidder $2$ except for the top mass point in the column, which is inevitably lost in the charging scheme.
    So for the first $2$ bidders, their total proxy value covers at least $d_{j, 3} / (d_{j, 1} + d_{j, 2})$ of their contribution to $\opt_j$, and the payment charged to them covers at least $d_{j, 1} / (d_{j, 1} + d_{j, 2})$ of their contribution to $\opt_j$.
    This corresponds to $q^{j, 2}$ in the statement of the lemma.

    Now we still need to cover the contribution of bidders $3$, $4$, and $5$, and as depicted in the bottom left subfigure, the mass points that are still available for charging is everything except those in the first column, which were already charged when covering bidders $1$ and $2$.
    So we have a $4$-by-$4$ triangle of mass points that can be used to cover $3$ bidders, and we can reuse the charging scheme introduced in the case where bidder $1$ would win slot $1$ to fully cover the contribution of these $3$ bidders.
    This is illustrated in the bottom right subfigure.
    For example, if bidder $3$ would win slot $4$, then the proxy value of bidder $3$ is $v_{3, j} \cdot d_{j, 4}$, and we still need to charge a mass point that is worth $v_{3, j} \cdot (d_{j, 3} - d_{j, 4})$ to bidder $3$ in terms of payment.
    For this we choose the second point in the second column, which has the right mass.
    Moreover, the payment before discounting corresponding to the second column is at least $v_{3, j}$, again because it is separated from the fourth column where bidder $3$ would be.
    Similarly, we can charge other points to bidders $4$ and $5$ to cover all their contribution to $\opt_j$.

    Now we present a fully general argument.
    First consider bidders $1, \dots, k$.
    The goal is to show that the total proxy value of these $k$ bidders is at least $d_{j, k + 1} / \sum_{\ell \le k} d_{j, \ell}$ of their contribution to $\opt_j$, and charge an amount of payment to these bidders that is at least $\sum_{\ell < k} d_{j, \ell} / \sum_{\ell \le k} d_{j, \ell}$ of their contribution to $\opt_j$.
    For the first part, it suffices to consider the proxy value of bidder $1$, which is
    \begin{align*}
        \pval_{1, j} & = v_{1, j} \cdot d_{j, k + 1} = \frac{d_{j, k + 1}}{\sum_{\ell \le k} d_{j, \ell}} \cdot \left(\sum_{i \le k} v_{1, j} \cdot d_{j, i}\right) \\
        & \ge \frac{d_{j, k + 1}}{\sum_{\ell \le k} d_{j, \ell}} \cdot \left(\sum_{i \le k} v_{i, j} \cdot d_{j, i}\right),
    \end{align*}
    where the inequality is because $v_{i, j} \le v_{1, j}$ for all $i \in [n]$.
    For the second part, we charge all the payment made by bidders who win the first $k - 1$ slots under $\bb$ to bidders $2, \dots, k$.
    The amount we get is
    \begin{align*}
        \sum_{\ell \le k - 1} p_{i(j, \ell), j} & = \sum_{\ell \le k - 1} d_{j, \ell} \cdot b_{i(j, \ell + 1), j} \ge \sum_{\ell \le k - 1} d_{j, \ell} \cdot b_{i(j, k), j} \\
        & \ge \sum_{\ell \le k - 1} d_{j, \ell} \cdot v_{1, j} = \frac{\sum_{\ell < k} d_{j, \ell}}{\sum_{\ell \le k} d_{j, \ell}} \cdot \sum_{i \le k} v_{1, j} \cdot d_{j, i} \tag{bidder $1$ would win slot $k + 1$ when bidding $v_{1, j}$} \\
        & \ge \frac{\sum_{\ell < k} d_{j, \ell}}{\sum_{\ell \le k} d_{j, \ell}} \cdot \sum_{i \le k} v_{i, j} \cdot d_{j, i} \tag{$v_{1, j} \ge v_{i, j}$}.
    \end{align*}
    This takes care of bidders $1, \dots, k$.
    One can check that the argument goes through even if $k = s$, i.e., when bidder $1$ would not win a slot when bidding $v_{1, j}$.

    Now consider bidders $k + 1, \dots, s$ (if $k < s$).
    The goal here is to show that the total proxy value of these $s - k$ bidders plus the total payment charged to them is at least their contribution to $\opt_j$.
    The argument is essentially the same as the one used in the first case of the proof, where bidder $1$ would win slot $1$ when bidding $v_{1, j}$.
    For each $i \in \{k + 1, \dots, s\}$, let $k'$ be the slot $i$ would win when bidding $v_{i, j}$.
    Consider two cases:
    \begin{itemize}
        \item If $k' \le i$, then $\pval_{i, j}$ is at least $i$'s contribution to $\opt_j$.
        \item If $k' > i$, we need to charge at least $v_{i, j} \cdot (d_{j, i} - d_{j, k'})$ to $i$ in terms of payment.
        The mass points we charge to $i$ are $\{(k, i), (k + 1, i + 1), \dots, (k - 1 + k' - i, k' - 1)\}$, where again the two coordinates correspond to the indices of the column (from the left) and the row (from the top), respectively.
        Observe that the largest $k' - 1$ bids are at least $i$'s value $v_{i, j}$, so for any $\ell < k' - 1$, the payment before discounting made by $i(j, \ell)$ has the following lower bound:
        \[
            p_{i(j, \ell), j} / d_{j, \ell} = b_{i(j, \ell + 1), j} \ge b_{i(j, k' - 1), j} \ge v_{i, j}.
        \]
        So the total payment charged to $i$ is
        \begin{align*}
            & \sum_{k \le \ell \le k - 1 + k' - i} (p_{i(j, \ell), j} / d_{j, \ell}) \cdot (d_{j, i - k + \ell} - d_{j, i - k + \ell + 1}) \\
            \ge\ & \sum_{k \le \ell \le k - 1 + k' - i} v_{i, j} \cdot (d_{j, i - k + \ell} - d_{j, i - k + \ell + 1}) \\
            =\ & v_{i, j} \cdot (d_{j, i} - d_{j, k'}),
        \end{align*}
        as desired.
    \end{itemize}
    Finally, observe that (1) the above charging scheme for bidders $k + 1, \dots, s$ only uses mass points in columns $k, \dots, s$, and (2) each mass point is used at most once across bidders $k + 1, \dots s$.
    This means
    \[
        \sum_{k + 1 \le i \le s} \pval_{i, j} + \sum_{k \le \ell \le s} p_{i(j, \ell), j} \ge \sum_{k + 1 \le i \le s} v_{i, j} \cdot d_{j, i}.
    \]

    Now we put together the bounds for the two sets of bidders to conclude the proof.
    Let $\alpha$ be the unique number in $[0, 1]$ such that
    \[
        \sum_{i \le k} v_{i, j} \cdot d_{j, i} = \alpha \cdot \opt_j, \qquad \sum_{k + 1 \le i \le s} v_{i, j} \cdot d_{j, i} = (1 - \alpha) \cdot \opt_j.
    \]
    The bounds we have for bidders $1, \dots, k$ now become
    \begin{align*}
        \sum_{i \le k} \pval_{i, j} & \ge \alpha \cdot q^{j, k}_1 \cdot \opt_j, \\
        \sum_{\ell \le k - 1} p_{i(j, \ell), j} & \ge \alpha \cdot q^{j, k}_2 \cdot \opt_j.
    \end{align*}
    Moreover, the bounds we have for bidders $k + 1, \dots, s$ imply the existence of $r \in \Delta^1$ such that
    \begin{align*}
        \sum_{k + 1 \le i \le s} \pval_{i, j} & \ge (1 - \alpha) \cdot r_1 \cdot \opt_j, \\
        \sum_{k \le \ell \le s} p_{i(j, \ell), j} & \ge (1 - \alpha) \cdot r_2 \cdot \opt_j.
    \end{align*}
    Adding together the respective bounds for the proxy value and the payment, we get
    \begin{align*}
        \sum_{i \in [n]} \pval_{i, j} & \ge (\alpha \cdot q_1^{j, k} + (1 - \alpha) \cdot r_1) \cdot \opt_j, \\
        \sum_{i \in [n]} p_{i, j} & \ge (\alpha \cdot q_2^{j, k} + (1 - \alpha) \cdot r_2) \cdot \opt_j.
    \end{align*}
    This concludes the proof.
\end{proof}

\section{Aggregating the Contributions of All Auctions}

Lemma~\ref{lem:contribution} bounds the contributions of each auction to the total proxy value and the total payment.
In order to obtain a PoA bound, we need to aggregate the bounds provided by Lemma~\ref{lem:contribution} over all auctions in a worst-case fashion.
This subsection is devoted to such an aggregation argument.
By the end of the subsection, we will have proved the main result of the paper, stated below.

\begin{theorem}
\label{thm:main}
    For any $m$, $s$, and $\{d_{j, k}\}$, let $j_0 = \argmin_{j \in [m]} \frac{d_{j, 2}}{d_{j, 1}}$.
    Then
    \begin{align*}
        & \poa(m, s, \{d_{j, k}\}) \ge \\
        & \min_{j \in [m], k \in \{2, \dots, s\}} \frac{d_{j_0, 2} \cdot \sum_{\ell < k} d_{j, \ell}}{d_{j_0, 1} \cdot \sum_{\ell < k} d_{j, \ell} - d_{j_0, 1} \cdot d_{j, k + 1} + d_{j_0, 2} \cdot \sum_{\ell \le k} d_{j, \ell}}.
    \end{align*}
\end{theorem}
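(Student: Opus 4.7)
The plan is to apply Lemma~\ref{lem:contribution} auction by auction, aggregate the resulting Pareto points, convert the aggregate bound into a welfare bound, and finally solve a small convex-geometry problem in $\bR_+^2$. For each auction $j$, Lemma~\ref{lem:contribution} supplies a point $P_j \in F_j := \mathrm{Conv}(\{q^{j,k}\}_{k \in [s]} \cup \Delta^1)$ such that $\sum_i \pval_{i,j} \ge P_{j,1} \cdot \opt_j$ and $\sum_i p_{i,j} \ge P_{j,2} \cdot \opt_j$. Writing $\omega_j := \opt_j / \sum_{j'} \opt_{j'}$ and $(A, B) := \sum_j \omega_j P_j$, the aggregate lies in $\mathrm{Conv}(\{q^{j,k}\}_{j, k} \cup \Delta^1)$. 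To turn this into a welfare bound I would use two inequalities. First, bidding $v_{i,j}$ in every auction $j$ is ROI-feasible: if this phantom strategy wins slot $k$ in auction $j$, the $(k+1)$-th highest bid is at most $v_{i,j}$, so the per-auction payment $d_{j,k} b_{i(j,k+1),j}$ is bounded by the per-auction value $d_{j,k} v_{i,j}$. Hence in any equilibrium bidder $i$'s expected value dominates their phantom value, which gives total welfare $\ge A \cdot \sum_j \opt_j$. Second, the aggregated ROI constraints give total welfare $\ge \sum_{i,j} \bE[p_{i,j}] \ge B \cdot \sum_j \opt_j$. Combining, $\poa \ge \max(A, B)$, so it remains to show $\max(A, B) \ge \min_{j, k \in \{2, \dots, s\}} L_{j, k}$ for every such $(A, B)$, where $L_{j, k}$ denotes the expression in the theorem.

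The geometric step is the heart of the proof. Since $\max(x, y)$ is convex and piecewise linear, its minimum over a convex polytope that meets the diagonal $y = x$ is attained at the leftmost crossing of the lower-left boundary with this diagonal, so it suffices to locate that crossing. I would split the extreme points of the hull into \emph{below-diagonal} points $\{q^{j, 1}\}_j \cup \{(1, 0)\}$ (using $q^{j, 1}_2 = 0$) and \emph{above-diagonal} points $\{q^{j, k}\}_{j, k \ge 2} \cup \{(0, 1)\}$, where the second classification uses $q^{j,k}_2 \ge q^{j,k}_1$ for $k \ge 2$ -- a consequence of $\sum_{\ell < k} d_{j, \ell} \ge d_{j, k-1} \ge d_{j, k+1}$. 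Parameterizing the segment from a below-diagonal point $(x_1, 0)$ to an above-diagonal point $(q_1, q_2)$ and equating the two coordinates gives the crossing $x_1 q_2 / (q_2 - q_1 + x_1)$. Substituting $x_1 = d_{j_0, 2}/d_{j_0, 1}$ and $(q_1, q_2) = q^{j, k}$ and clearing denominators by $d_{j_0, 1}$ recovers exactly $L_{j, k}$. Two monotonicity observations then pin down the adversary's optimum: the crossing has derivative $q_2(q_2 - q_1)/(q_2 - q_1 + x_1)^2 \ge 0$ in $x_1$, so $q^{j_0, 1}$ dominates every other below-diagonal endpoint; and pairing $q^{j_0, 1}$ with $(0, 1)$ yields value $d_{j_0, 2}/(d_{j_0, 1} + d_{j_0, 2})$, which is at least $L_{j, s}$ for every $j$ (using $d_{j, s+1} = 0$ and the identity $(d_{j_0, 1}+d_{j_0, 2})\sum_{\ell < s} d_{j,\ell} \le d_{j_0, 1}\sum_{\ell < s} d_{j,\ell} + d_{j_0, 2}\sum_{\ell \le s} d_{j,\ell}$), so $(0, 1)$ can be discarded.

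The main technical obstacle I anticipate is ruling out that a convex combination of three or more extreme points of the hull yields a $y = x$ crossing smaller than the two-endpoint segments enumerated above. This should follow from the piecewise linearity of $\max(x, y)$ together with two convexity facts: any convex combination of below-diagonal points stays on the $x$-axis with $x$-coordinate at least $d_{j_0, 2}/d_{j_0, 1}$, so it acts like a worse effective lower endpoint by the monotonicity above; and any convex combination of above-diagonal points is itself an above-diagonal point whose pairing with $q^{j_0, 1}$ is dominated by one of the pure-vertex pairings through the same monotonicity identity. Both arguments are routine convex analysis on a planar polytope, but need to be spelled out to justify the reduction of the general minimization to the explicit finite minimum $\min_{j, k \ge 2} L_{j, k}$ claimed in the theorem.
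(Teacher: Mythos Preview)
Your proposal is correct and follows essentially the same route as the paper's proof: apply Lemma~\ref{lem:contribution} per auction, aggregate the normalized $(\pval,p)$ bounds into a point of the convex hull of $\{q^{j,k}\}$, use the equilibrium condition (phantom bidding is ROI-feasible) and the ROI constraints to get $\poa \ge \max(A,B)$, and then minimize $\max(x,y)$ over the hull by locating the diagonal crossing of the lower envelope, reducing to segments from the leftmost $q^{j,1}$ (namely $q^{j_0,1}$) to each $q^{j,k}$ with $k\ge 2$. The only cosmetic difference is that the paper absorbs the lossless component $r\in\Delta^1$ into the $\{q^{j,k}\}$ weights up front (noting the lower envelope of $\{q^{j,k}\}_k$ lies below $\Delta^1$), whereas you carry the simplex vertices $(1,0)$ and $(0,1)$ into the geometric step and discard them there via your monotonicity and $L_{j,s}$ comparison; both treatments are valid and lead to the same finite minimum.
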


The above bound inevitably has a complex form.
Nevertheless, as discussed in the intial part of the proof, it has an intuitive geometric interpretation: The bound is the $x$- or $y$-coordinate (they are the same) of the lower intersection of the convex closure of $\{q^{j, k}\}$ and the line $x = y$.
Later we will also present a simplified (but looser) version of the bound in Corollary~\ref{cor:simplified}.

\begin{proof}[Proof of Theorem~\ref{thm:main}]
    \begin{figure*}[t]
    \centering
    \includegraphics[width=.45\linewidth]{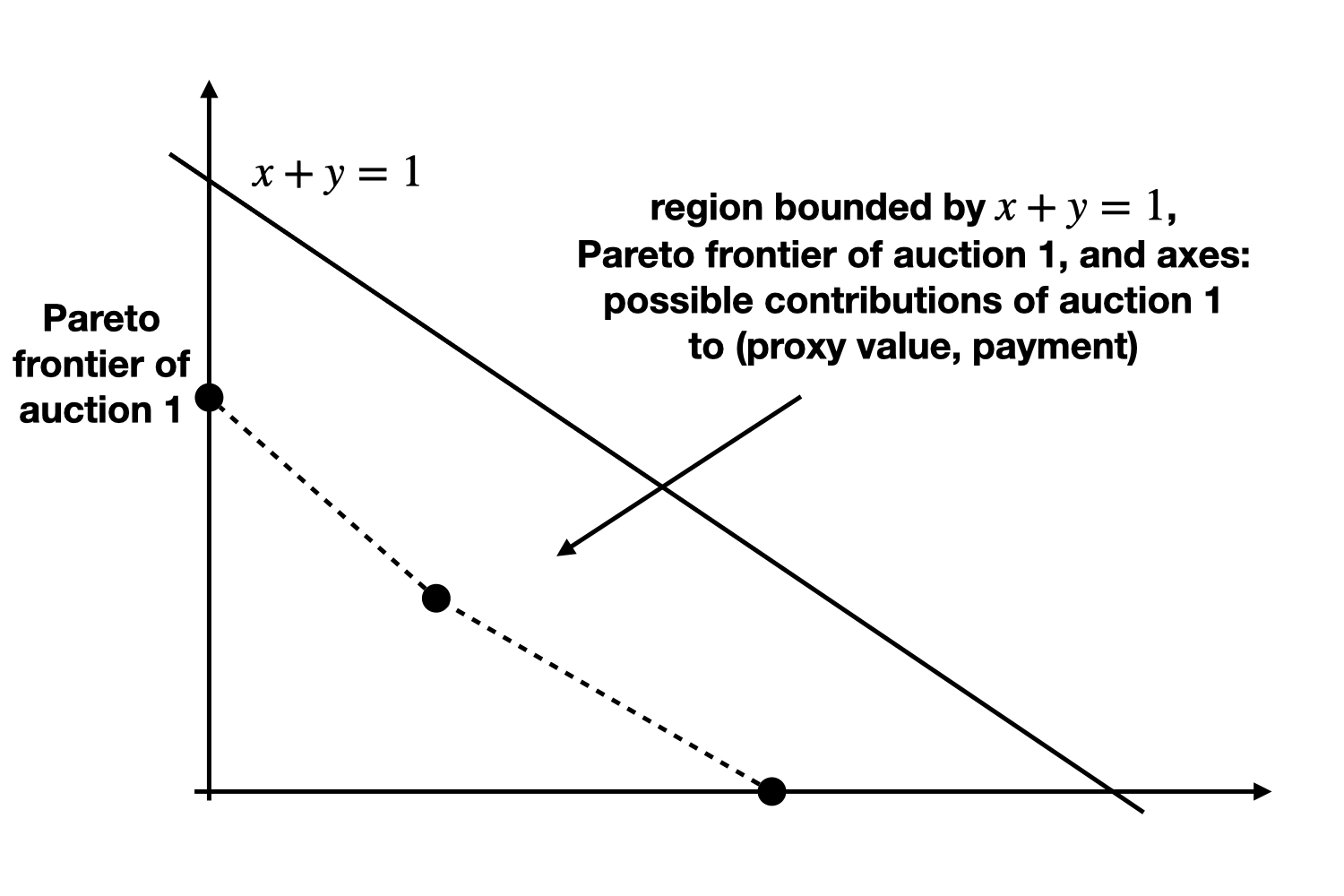}
    \includegraphics[width=.45\linewidth]{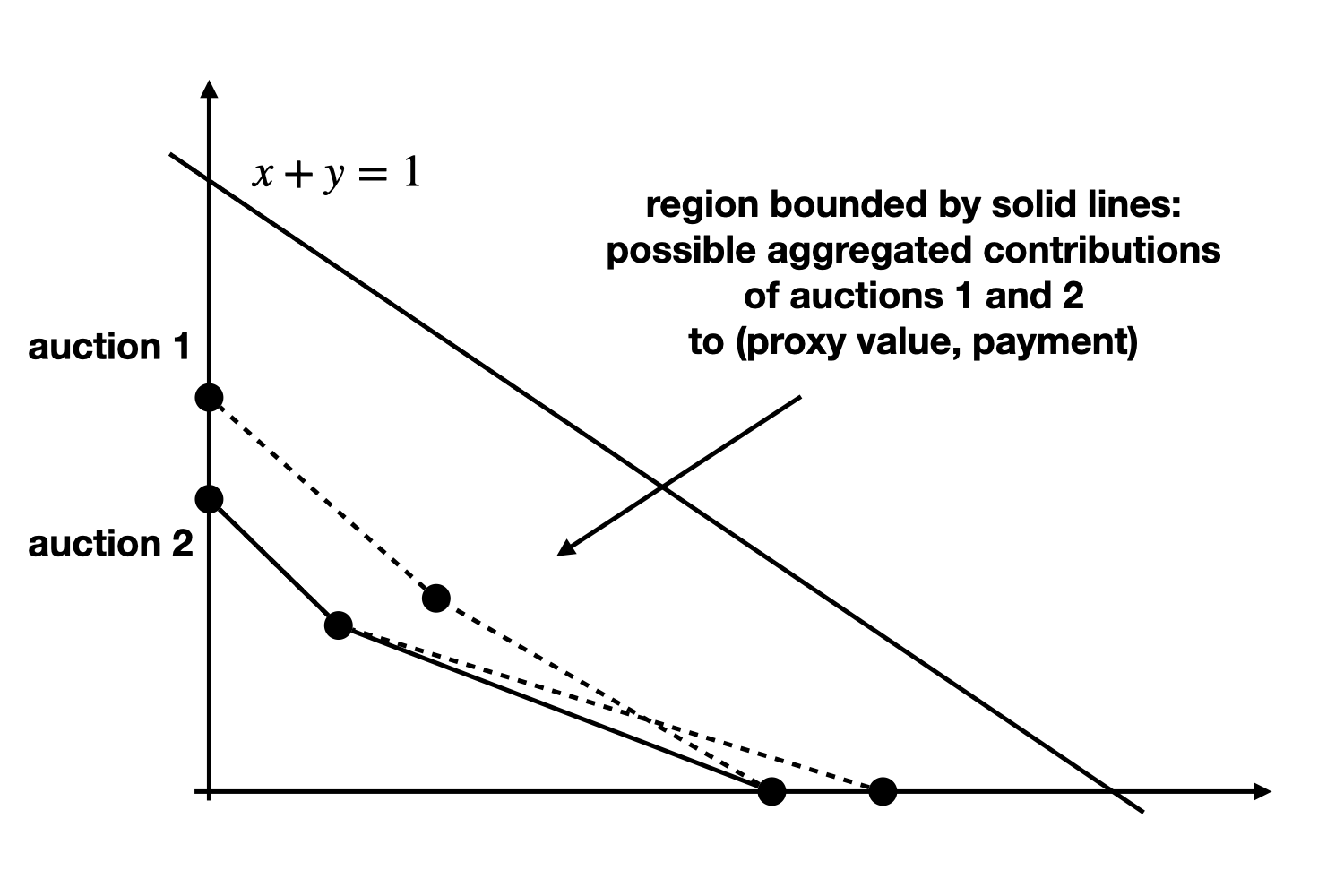}
    \includegraphics[width=.45\linewidth]{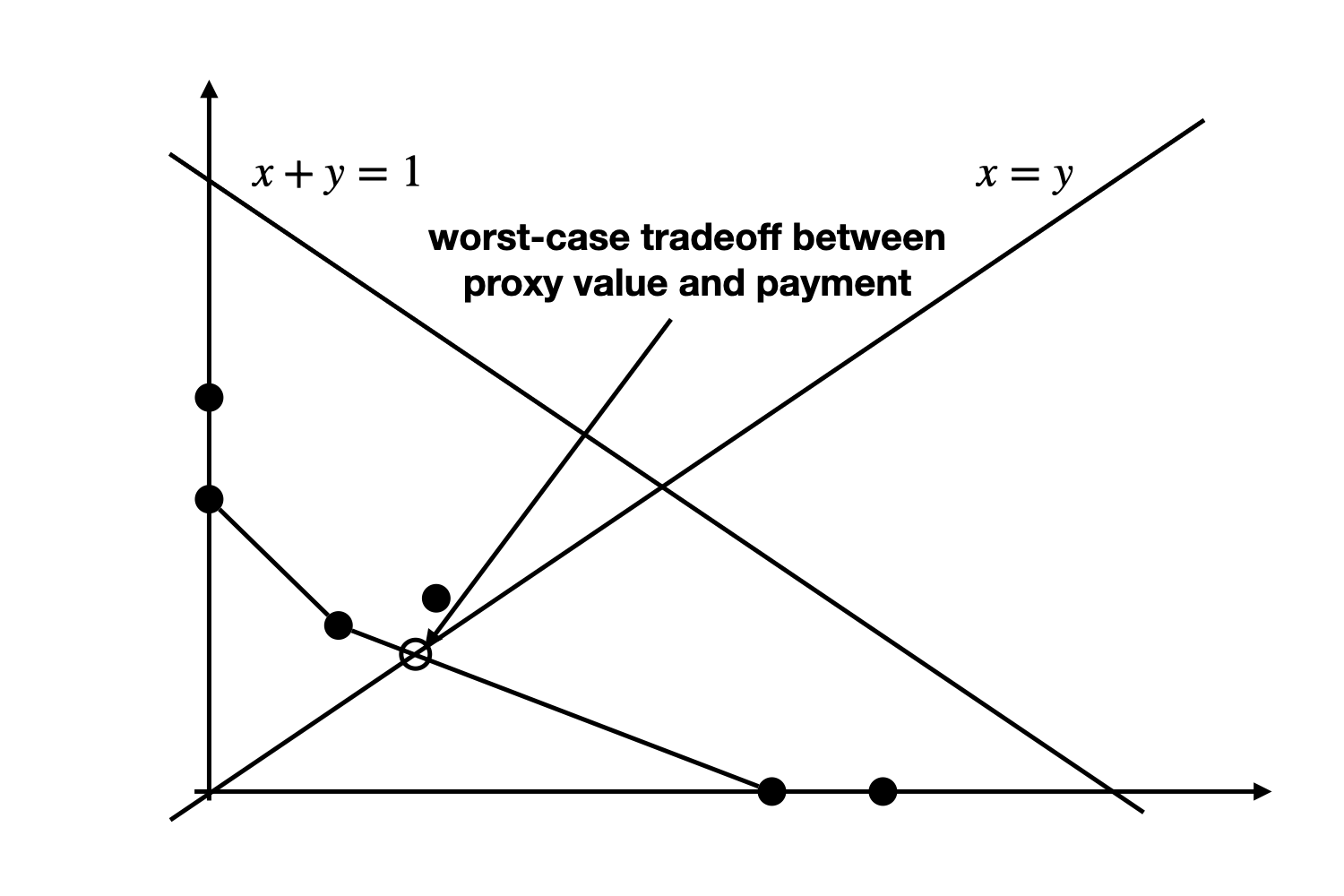}
    \includegraphics[width=.45\linewidth]{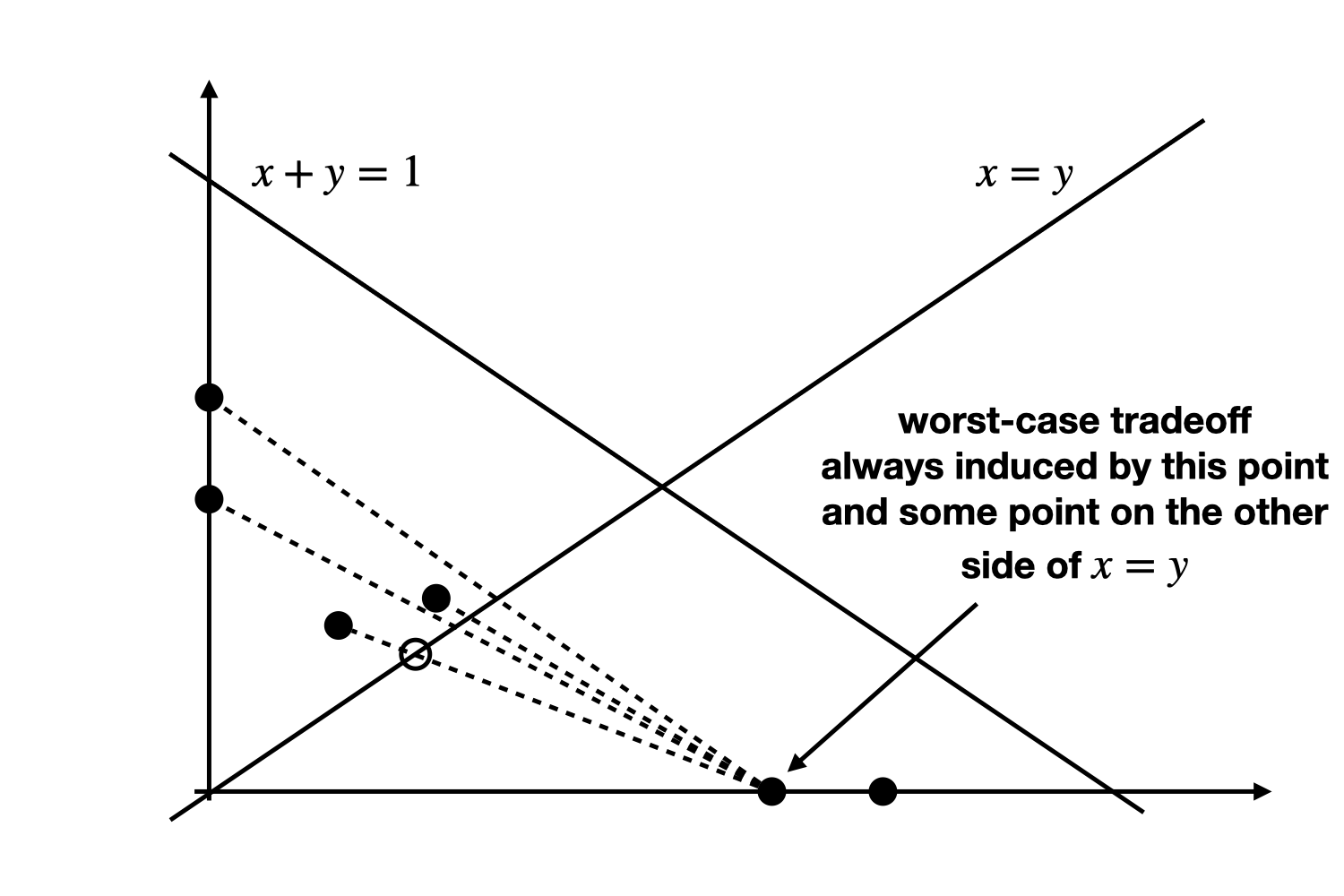}
    \caption{Geometric interpretation of the aggregation argument.}
    \label{fig:aggregation}
    \end{figure*}

    Again we start with an intuitive geometric interpretation in Figure~\ref{fig:aggregation}, which also serves as an overview of the proof.
    For simplicity, assume $m = 2$ and $s = 3$.
    Lemma~\ref{lem:contribution} is visualized in the top left subfigure for auction $1$.
    The dashed lines connect points (which correspond to $q^{1, k}$ for $k \in [s]$) on the Pareto frontier of auction $1$, and the solid line $x + y = 1$ corresponds to the lossless component $r$.
    The bounds given in Lemma~\ref{lem:contribution} can only be points within the region bounded by the dashed lines, $x + y = 1$, and the axes, as illustrated (some points within the region do not correspond to bounds in Lemma~\ref{lem:contribution}).
    Note that everything is normalized by the contribution of auction $1$, $\opt_1$, in this subfigure.

    Now in the top right subfigure, we further superimpose the Pareto frontier of auction $2$, and consider the aggregated contribution of auctions $1$ and $2$.
    Here, we view the respective contributions of auctions $1$ and $2$ as points above the respective Pareto frontiers, weighted by $\opt_1$ and $\opt_2$.
    Then, the aggregated contribution of the two auctions, normalized by $\opt_1 + \opt_2$, is the weighted average of these two points.
    Since the optimal welfare can be arbitrarily distributed between the two auctions, the aggregated point can be any convex combination of any two points in the respective regions.
    In other words, the aggregated contribution of the two auctions can be any point in the region bounded by solid lines, as illustrated.

    To determine the worst-case aggregated bounds, consider the bottom left subfigure.
    Recall that in expectation, the welfare in equilibrium is lower bounded by both the total proxy value and the total payment after aggregation (we will give a detailed argument later).
    That is, both coorninates of the point corresponding to the aggregated bounds are lower bounds of the welfare in equilibrium, and the worst case is where the larger one of the two coordinates is minimized.
    Now since the region bounding the aggregated point is convex (it is the convex closure of the Pareto frontiers), the worst case must be achieved when the two coordinates are equal.
    Geometrically, this means the worst-case point is the intersection of the lower envelope of $\{q^{j, k}\}$ and the line $x = y$, as illustrated in the bottom left subfigure.
    This already gives a way for computing the lower bound on the worst-case welfare in equilibrium.

    To further simplify the bound, observe that the only points on the right of the line $x = y$ are $\{q^{j, 1}\}_{j \in [m]}$.
    This is because for any $j$ and $k \ge 2$,
    \[
        \sum_{\ell < k} d_{j, \ell} \ge d_{k + 1, \ell},
    \]
    which means the $y$-coordinate of the point is no smaller than the $x$-coordinate.
    So, the intersection corresponding to the worst-case bounds must be induced by some point from $\{q^{j, 1}\}_{j \in [m]}$, and some point from the rest of the Pareto frontiers.
    Moreover, the $y$-coordinate of each $q^{j, 1}$ is $0$, so there is a worst point among $\{q^{j, 1}\}$ regardless of which other point we pick, and the worst point is the one with the smallest $x$-coordinate.
    Once we fix this worst point, we only need to consider each point in the rest of the Pareto frontiers and find the one inducing the worst-case bounds in combination with the fixed point.
    This is illustrated in the bottom right subfigure.

    Now we give a fully general proof.
    To aggregate Lemma~\ref{lem:contribution} over auctions, we will show that there exist weights $w_{j, k}$ for each $j \in [m]$ and $k \in [s]$ where $\sum_{j, k} w_{j, k} = 1$, such that
    \[
        \sum_{i, j} \pval_{i, j} \ge q_1 \cdot \sum_j \opt_j, \qquad \sum_{i, j} p_{i, j} \ge q_2 \cdot \sum_j \opt_j\,
    \]
    where $q = \sum_{i, j} w_{j, k} \cdot q^{j, k}$.
    Here we closely follow the plan illustrated in Figure~\ref{fig:aggregation}.
    By Lemma~\ref{lem:contribution}, for each $j$, there exists $k_1$ and $k_2 \in [s]$, and $w_1$ and $w_2 \in \bR_+$ where $w_1 + w_2 = 1$, such that
    \begin{align*}
        \sum_i \pval_{i, j} & \ge (w_1 \cdot q^{j, k_1} + w_2 \cdot q^{j, k_2})_1 \cdot \opt_j,\\
        \sum_i p_{i, j} & \ge (w_1 \cdot q^{j, k_1} + w_2 \cdot q^{j, k_2})_2 \cdot \opt_j.
    \end{align*}
    In particular, this is because the lower envelope of $\{q^{j, k}\}_k$ is below $\Delta_1$.
    So for $j$, we let
    \[
        w_{j, k_1} = w_1 \cdot \frac{\opt_j}{\sum_{j'} \opt_{j'}}, \qquad w_{j, k_2} = w_2 \cdot \frac{\opt_j}{\sum_{j'} \opt_{j'}},
    \]
    and $w_{j, k} = 0$ for $k \in [s] \setminus \{k_1, k_2\}$.
    One can check $\{w_{j, k}\}$ satisfy the condition above.

    The existence of these weights means that the lower bounds normalized by the optimal welfare is a point in the convex closure of $\{q^{j, k}\}$.
    Given this, we consider the worst-case point and the corresponding lower bound of the welfare.
    As illustrated in Figure~\ref{fig:aggregation}, the worst-case point is the intersection of the lower envelope of $\{q^{j, k}\}$ with the line $x = y$.
    Moreover, as discussed earlier, since $\{q^{j, 1}\}$ are the only points on the right of $x = y$, we only need to consider segments between pairs of points in $\{q^{j, 1}\} \times \{q^{j, k}\}_{k > 1}$.
    And since $q^{j, 1}_2 = 0$ for all $j$, we can further restrict to the leftmost point in $\{q^{j, 1}\}$, whose index is $j_0 = \argmin_j \frac{d_{j, 2}}{d_{j, 1}}$.
    For each $j$ and $k > 1$, one can then compute the $x$- or $y$-coordinate (they are the same) of the intersection of $x = y$ and the line determined by $q^{j_0, 1}$ and $q^{j, k}$, which is
    \[
        \frac{d_{j_0, 2} \cdot \sum_{\ell < k} d_{j, \ell}}{d_{j_0, 1} \cdot \sum_{\ell < k} d_{j, \ell} - d_{j_0, 1} \cdot d_{j, k + 1} + d_{j_0, 2} \cdot \sum_{\ell \le k} d_{j, \ell}}.
    \]
    Taking the minimum over $j$ and $k > 1$, we get
    \begin{align*}
        & \frac{\max\left\{\sum_{i, j} \pval_{i, j}, \sum_{i, j} p_{i, j}\right\}}{\sum_j \opt_j} \ge \\
        & \min_{j \in [m], k \in \{2, \dots, s\}} \frac{d_{j_0, 2} \cdot \sum_{\ell < k} d_{j, \ell}}{d_{j_0, 1} \cdot \sum_{\ell < k} d_{j, \ell} - d_{j_0, 1} \cdot d_{j, k + 1} + d_{j_0, 2} \cdot \sum_{\ell \le k} d_{j, \ell}}.
    \end{align*}

    The only step left is to relax the left hand side of the above to the PoA.
    To this end, we only need to show that in equilibrium,
    \[
        \sum_{i, j} \bE[\val_{i, j}] \ge \sum_{i, j} \bE[\pval_{i, j}], \qquad \sum_{i, j} \bE[\val_{i, j}] \ge \sum_{i, j} \bE[p_{i, j}].
    \]
    The latter follows directly from each bidder $i$'s ROI constraint.
    As for the former, since $\bb$ consist of equilibrium strategies, fixing $\bb_{-i}$, each bidder $i$ must be maximizing their expected total value subject to the ROI constraint.
    On the other hand, the ROI constraint is satisfied under the proxy bidding strategy $b_{i, j} = v_{i, j}$, so the expected proxy value cannot exceed the actual expected value in equilibrium.
    That is, for each $i \in [n]$,
    \[
        \sum_j \bE[\val_{i, j}] \ge \sum_j \bE[\pval_{i, j}].
    \]
    Summing over $i$ gives the desired bound.
    This concludes the proof of the theorem.
\end{proof}

\paragraph{A simplified bound.}
Now we present a simplified bound on the PoA of the generalized second-price auction, which can be obtained by considering the sum of the contributions to the value and the payment as in Lemma~\ref{lem:contribution}.
This bound can be viewed as an approximation of Theorem~\ref{thm:main}.
Geometrically, as we will see, the way we derive this bound is equivalent to considering a tangent line of the convex closure of $\{q^{j, k}\}$.
Conceptually, this simplified bound highlights the qualitative relation between the smoothness of the discount factors and the efficiency of the generalized second-price auction.

\begin{corollary}
\label{cor:simplified}
    For any $m$, $s$, and $\{d_{j, k}\}$,
    \[
        \poa(m, s, \{d_{j, k}\}) \ge \min_{j \in [m], k \in [s]} \frac12 \cdot \frac{\sum_{\ell < k} d_{j, \ell} + d_{j, k + 1}}{\sum_{\ell \le k} d_{j, \ell}}.
    \]
\end{corollary}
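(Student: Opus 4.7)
The plan is to derive Corollary~\ref{cor:simplified} as a weakening of Lemma~\ref{lem:contribution}: rather than keeping the two coordinate bounds separate and intersecting the convex closure of $\{q^{j,k}\}$ with the line $x = y$ (as done in Theorem~\ref{thm:main}), I would simply add the two coordinates. Geometrically this replaces the $y = x$ cut with a slope-$(-1)$ tangent to the Pareto frontier, which gives a closed-form bound at the cost of a factor-of-$2$ slack.

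First, fix any equilibrium $\bb$ and any auction $j$. Apply Lemma~\ref{lem:contribution} to obtain $k \in [s]$, $r \in \Delta^1$, and $\alpha \in [0,1]$ such that the two inequalities in the lemma hold. Summing them gives
\[
    \sum_i \pval_{i, j} + \sum_i p_{i, j} \ge \bigl[\alpha \cdot (q_1^{j,k} + q_2^{j,k}) + (1-\alpha)\cdot(r_1 + r_2)\bigr] \cdot \opt_j.
\]
Since $r \in \Delta^1$ forces $r_1 + r_2 = 1$, and since
\[
    q_1^{j,k} + q_2^{j,k} = \frac{d_{j, k+1} + \sum_{\ell < k} d_{j, \ell}}{\sum_{\ell \le k} d_{j, \ell}} \le 1
\]
(using $d_{j, k+1} \le d_{j, k}$), the convex combination is no smaller than $q_1^{j,k} + q_2^{j,k}$. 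Relaxing further to the minimum over $k$, I obtain a per-auction bound of the form $\sum_i \pval_{i, j} + \sum_i p_{i, j} \ge c_j \cdot \opt_j$ with $c_j = \min_{k \in [s]}(q_1^{j,k} + q_2^{j,k})$.

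Next, I would aggregate across auctions. Summing the per-auction inequality over $j$ and bounding the weighted mean by its minimum yields
\[
    \sum_{i, j} \pval_{i, j} + \sum_{i, j} p_{i, j} \ge C \cdot \sum_j \opt_j, \qquad C = \min_{j \in [m],\, k \in [s]} \frac{\sum_{\ell < k} d_{j, \ell} + d_{j, k+1}}{\sum_{\ell \le k} d_{j, \ell}}.
\]
Taking expectations and reusing the two welfare inequalities established in the last step of the proof of Theorem~\ref{thm:main}, namely $\sum_{i,j} \bE[\val_{i,j}] \ge \sum_{i,j}\bE[\pval_{i,j}]$ (from equilibrium optimality against the truthful phantom bid) and $\sum_{i,j} \bE[\val_{i,j}] \ge \sum_{i,j} \bE[p_{i,j}]$ (from the ROI constraint), addition gives $2 \sum_{i,j}\bE[\val_{i,j}] \ge C \cdot \sum_j \opt_j$. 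Dividing by $2\sum_j \opt_j$ produces exactly the claimed bound.

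There is no real obstacle here: the work is already embedded in Lemma~\ref{lem:contribution} and in the equilibrium relaxations from the proof of Theorem~\ref{thm:main}. The only new ingredient is the monotonicity observation $\|q^{j,k}\|_1 \le 1$, which lets me collapse the $\alpha$ and $r$ parameters. The factor $\tfrac{1}{2}$ arises naturally from $\max\{a,b\} \ge \tfrac{1}{2}(a+b)$ applied to the pair (total value, total payment); this is precisely where the bound becomes loose relative to Theorem~\ref{thm:main}, and why Corollary~\ref{cor:simplified} is advertised as a simplified, not tight, estimate.
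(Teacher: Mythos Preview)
Your proposal is correct and follows essentially the same approach as the paper: both arguments work directly from Lemma~\ref{lem:contribution}, collapse the two coordinate bounds by summing them (the paper phrases this geometrically as taking the tangent to the convex closure parallel to $x+y=1$), and then halve the resulting minimum of $\|q^{j,k}\|_1$ using the equilibrium and ROI relaxations. Your write-up is the algebraic counterpart of the paper's geometric sketch, with no substantive difference in the underlying reasoning.
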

\begin{proof}
    The statement can be derived from Theorem~\ref{thm:main} by relaxing the bound therein.
    Here we present a more intuitive geometric argument that relies on Lemma~\ref{lem:contribution}.
    Recall that the bound in Theorem~\ref{thm:main} corresponds to the lower intersection of the convex closure of $\{q^{j, k}\}$ and the line $x = y$.
    This can be relaxed to the intersection of the tangent line of the convex closure that is parallel to $x + y = 1$, and the line $x = y$.
    Below we argue that the latter intersection corresponds to the bound in the statement to be proved.

    Observe that the tangent line that we care about must be induced by the point in $\{q^{j, k}\}$ with the smallest sum of the two coordinates.
    Moreover, the intersection between this tangent line and $x = y$ must have the same sum of the two coordinates, which means each coordinate of the intersection is precisely half of the sum.
    Recall that the sum of the two coordinates of each $q^{j, k}$ is $\frac{\sum_{\ell < k} d_{j, \ell} + d_{j, k + 1}}{\sum_{\ell \le k} d_{j, \ell}}$.
    Taking the minimum over $j$ and $k$ and dividing it by two gives the bound to be proved.
\end{proof}

Note that the simplified bound can never be actually attained: In order for the simplified bound to be equal to the bound in Theorem~\ref{thm:main}, it has to be the case that the lower intersection of the convex closure and $x = y$ is some point in $\{q^{j, k}\}$, say $q^{j^*, k^*}$.
Then we know that $q^{j^*, k^*}_1 = q^{j^*, k^*}_2$, which can only happen when $k^* = 2$ and $d_{j^*, 1} = d_{j^*, 2} = d_{j^*, 3}$.
This means $q^{j^*, k^*}_1 = q^{j^*, k^*}_2 = 1/2$, and the simplified bound would be $1/2$. 
However, the latter is impossible, because the PoA of the generalized second-price auction is always strictly worse than $1/2$. 
This also highlights the necessity of our aggregation argument used in the proof of Theorem~\ref{thm:main} for establishing a tight PoA bound.

\section{Tightness of the Bound}

Now we discuss the tightness of the bound.
The notion of tightness we consider is a worst-in-class one: We show that among all instances that share the same bound as given by Theorem~\ref{thm:main}, there is one for which the bound is tight.

\begin{theorem}
\label{thm:tightness}
    For any $t \in (0, \frac12)$, there exists a positive integer $s \ge 2$ and a real number $x \in \bR_+$, such that if we set $m = s$, $d_{j, 1} = x + 1$ for each $j \in [m]$, and $d_{j, k} = 1$ for each $j \in [m]$ and $k > 1$, then
    \begin{itemize}
        \item $\poa(m, s, \{d_{j, k}\}) = t$, and
        \item $\frac{s - 1 + x}{(1 + x) \cdot (s - 1 + x) + s + x} = t$.
    \end{itemize}
    In particular, the quantity in the second condition is the bound given in Theorem~\ref{thm:main} evaluated on the instance $(m, s, \{d_{i, j}\})$.
\end{theorem}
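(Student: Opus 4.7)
The plan is threefold: verify the expression in the second bullet equals the bound from Theorem~\ref{thm:main} on the described instance; show this expression can be tuned to any $t \in (0, 1/2)$; and construct an equilibrium on that instance whose welfare ratio is exactly $t$.

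For the first step, since all auctions share the same discount factors, any $j_0$ works; fix $j_0 = 1$. Substituting $\sum_{\ell < k} d_{j, \ell} = x + k - 1$, $\sum_{\ell \le k} d_{j, \ell} = x + k$, and $d_{j, k+1} = 1$ for $k < s$ (else $0$), the expression in Theorem~\ref{thm:main} reduces to $(x + k - 1)/(x^2 + kx + 2k - 2)$ for $k < s$ and $(x + s - 1)/(x^2 + (s+1) x + 2s - 1)$ for $k = s$. The former is strictly decreasing in $k$ (as seen by differentiating), and at $k = s - 1$ it strictly exceeds the $k = s$ value — the cross-multiplied difference factors as $-(x^2 + sx + s - 2) \le 0$. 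So the minimum over $k \in \{2, \dots, s\}$ is attained at $k = s$, giving exactly the stated formula.

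For the second step, denote this value $h(s, x)$. A short derivative computation yields $\partial h / \partial x < 0$ on $[0, \infty)$, so $h(s, \cdot)$ strictly decreases from $h(s, 0) = (s - 1)/(2s - 1)$ to $\lim_{x \to \infty} h(s, x) = 0$. Since $(s - 1)/(2s - 1) \nearrow 1/2$, the union of these ranges over $s \ge 2$ is exactly $(0, 1/2)$. So for any $t$ in this interval, pick $s$ with $(s - 1)/(2s - 1) > t$ and, by continuity, $x \ge 0$ with $h(s, x) = t$.

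The third step is the main obstacle. Guided by Figure~\ref{fig:aggregation}, the worst equilibrium must place the aggregated bounds at $(t, t)$, which is the convex combination of the extremal points $q^{j_0, 1} = (1/(x + 1), 0)$ and $q^{j, s} = (0, (x + s - 1)/(x + s))$. I would partition the $m = s$ auctions into two groups whose total $\opt$ weights match the convex coefficients $\lambda, 1 - \lambda$: ``type-$A$'' auctions in which the highest-value bidder is displaced from slot $1$ to slot $2$ by a single overbidder, and ``type-$B$'' auctions in which the highest-value bidder is displaced out of all $s$ slots by $s$ overbidders. Each auction concentrates positive value on one bidder (matching $\alpha = 1$ in Lemma~\ref{lem:contribution}), and the overbidders are drawn from a shared pool across auctions, with their values in the other auctions calibrated so that each aggregate ROI constraint binds exactly. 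Then the welfare equals both $\sum_{i, j} \pval_{i, j}$ and $\sum_{i, j} p_{i, j}$, each equal to $t \cdot \sum_j \opt_j$. The main technical challenge will be verifying no bidder has a profitable deviation — reducing a bid saves payment but loses value, while raising a bid costs ROI for marginal value. Making each bidder's ROI tight and each bidder locally indifferent between equilibrium and truthful bidding removes all slack for such deviations, and I expect this to require fine-tuning values in proportion to the discount factors and the weights $\lambda, 1 - \lambda$.
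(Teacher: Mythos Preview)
Your first two steps are correct and in fact more thorough than the paper, which does not explicitly verify that the bound from Theorem~\ref{thm:main} evaluates to the stated formula on this instance. Your high-level plan for the third step---splitting the auctions into a type-$A$ group realizing $q^{j_0,1}$ and a type-$B$ group realizing $q^{j,s}$, with a shared pool of overbidders whose ROI constraints bind---is exactly the paper's approach.

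There is, however, a real gap in your type-$B$ description. You write that each auction ``concentrates positive value on one bidder (matching $\alpha=1$),'' but $\alpha=1$ at $k=s$ does not mean one bidder; and concentrating on one bidder will not make the point $q^{j,s}=(0,(s-1+x)/(s+x))$ tight. In the proof of Lemma~\ref{lem:contribution}, the payment inequality
\[
\sum_{\ell<k}p_{i(j,\ell),j}\ \ge\ q^{j,k}_2\cdot\sum_{i\le k}v_{i,j}\,d_{j,i}
\]
holds with equality only when $v_{1,j}=\cdots=v_{k,j}$. With a single bidder of value $w$ in the type-$B$ auction you get $\opt_B=w(1+x)$, while all $s$ overbidders must bid at least $w$ for the displacement, so the payment is at least $w(s-1+x)$ and the payment-to-$\opt$ ratio is $(s-1+x)/(1+x)$, strictly larger than $q^{j,s}_2=(s-1+x)/(s+x)$. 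Carrying this through the ROI balancing yields an equilibrium with
\[
\frac{\text{welfare}}{\opt}=\frac{s-1+x}{(1+x)(s+x)}\ >\ t,
\]
so tightness is missed. The paper instead places $s$ bidders of equal value $1$ in the single type-$B$ auction, giving $\opt_B=s+x$ and making the ratio match $q^{j,s}_2$ exactly. (The paper also needs a limiting $\varepsilon\to 0$ construction, since the overbids must be strictly above the displaced values for the no-deviation checks to go through; your hope of an exact equality is likely too optimistic.)
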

\begin{proof}
    We first pick $s$ and $x$ satisfying the second condition.
    Observe two facts:
    \begin{itemize}
        \item When $x = 0$, the quantity reduces to $\frac{s - 1}{2s - 1}$, which increases as $s$ increases and goes to $1/2$.
        \item Fixing any $s$, when $x$ goes to $\infty$, the quantity goes to $0$.
    \end{itemize}
    So one way to pick $(s, x)$ is to let $s$ be any positive integer such that $\frac{s - 1}{2s - 1} \ge t$, and there must exist $x$ such that $(s, x)$ satisfies the second condition.

    Suppose without loss of generality ties are broken in favor of bidders with smaller indices.
    We construct a valuation profile $\{v_{i, j}\}$ with $n = 2s$ bidders and deterministic bidding strategies $\bb$ that form an equilibrium, parametrized by some $\eps > 0$, where
    \[
        \frac{\sum_{i, j} \val_{i, j}}{\sum_j \opt_j} \to t,~\text{as}~\eps \to 0.
    \]
    We partition the $n = 2s$ bidders into two groups each of size $s$, $\{1, \dots, s\}$ and $\{s + 1, \dots, 2s\}$.
    Bidder $1$ has value $(1 + \eps) \cdot (1 + x)$ in auction $1$, bidder $s$ has value $\eps$ in auction $1$, while all other bidders have value $0$ in auction $1$.
    For each $i \in \{2, \dots, s - 1\}$, bidder $i$ has value $(1 + \eps)$ in auction $i$, bidder $s$ has value $\eps$ in auction $i$, while all other bidders have value $0$ in auction $i$.
    This fully specifies the valuation profile in the first $s - 1$ auctions.
    As for auction $s$, each bidder $i \in \{s + 1, \dots, 2s\}$ in the second group has value $1$, while all bidders in the first group have value $0$.

    One equilibrium is where for each $i \in [s - 1]$, bidder $i$ bids $b_{i, i} = \eps$ in auction $i$, bidder $s$ bids $\infty$ in auction $i$, while all other bidders $i'$ bid $b_{i', i} = 0$ in auction $i$.
    As a result, bidder $s$ wins the first slot in each of these $s - 1$ auctions, whereas bidder $i$ wins the second slot.
    So bidder $s$ receives total value $\eps \cdot (1 + x) \cdot (s - 1)$ and pays the same amount in total, bidder $1$ receives value $(1 + \eps) \cdot (1 + x)$ in auction $1$ and pays $0$, and each bidder $i \in \{2, \dots, s - 1\}$ receives value $1 + \eps$ and pays $0$ in auction $i$.
    As for auction $s$, each bidder $i \in [s]$ in the first group bids $b_{i, m} = 1 + \eps$, while each bidder $i' \in \{s + 1, \dots, 2s\}$ in the second group bids $b_{i', m} = 0$.
    As a result, no bidder receives any value in auction $s$, bidder $1$ pays $(1 + \eps) \cdot (1 + x)$, and each bidder $i \in [s - 1]$ pays $1 + \eps$.
    One can check this is in fact an equilibrium, since (1) the ROI constraint of each bidder is satisfied, (2) no bidder $i \in [s - 1]$ can win the first slot in auction $i$ without violating their ROI constraint, and (3) no bidder in the second group can win anything in auction $s$ without violating their ROI constraint.

    Now we compute the ratio between the welfare in equilibrium and the optimal welfare.
    The optimal welfare is simply
    \begin{align*}
        \sum_j \opt_j & = \opt_1 + (s - 2) \cdot \opt_2 + \opt_s \\
        & = [(1 + \eps) \cdot (1 + x) \cdot (1 + x) + \eps] \\
        & \quad + (s - 2) \cdot [(1 + \eps) \cdot (1 + x) + \eps] + [(1 + x) + s - 1] \\
        & = (1 + x) \cdot (s - 1 + x) + s + x + O(\eps),
    \end{align*}
    where $O(\cdot)$ hides a constant that depends on $s$ and $x$.
    On the other hand, the welfare in equilibrium is
    \begin{align*}
        \sum_{i, j} \val_{i, j} & = \val_{1, 1} + (s - 2) \cdot \val_{i, i} + (s - 1) \cdot \val_{s, 1} \\
        & = (1 + x) \cdot (1 + \eps) + (s - 2) \cdot (1 + \eps) + (s - 1) \cdot \eps \cdot (1 + x) \\
        & = s - 1 + x + O(\eps),
    \end{align*}
    where again $O(\cdot)$ hides a constant that depends on $s$ and $x$.
    So the ratio between the two goes to $t = \frac{s - 1 + x}{(1 + x) \cdot (s - 1 + x) + s + x}$ as $\eps$ goes to $0$.
    This concludes the proof.
\end{proof}

\bibliographystyle{plainnat}
\bibliography{ref}

\end{document}